\begin{document}
\title[Numerical Continuation of resonances]{Numerical Continuation of resonances and bound states in coupled channel Schr\"odinger equations}

\author[P.\ Klosiewicz et.\ al.]{Przemys\l{}aw K\l{}osiewicz\corrauth 
, Jan Broeckhove and Wim Vanroose}
\address{Department of Mathematics and Computer Science, Universiteit Antwerpen,\\ Middelheimlaan 1, B-2020 Antwerpen}
\email{{\tt przemyslaw.klosiewicz@ua.ac.be} (P.\ K\l{}osiewicz)}


\begin{abstract}
In this contribution, we introduce numerical continuation methods and bifurcation theory, techniques which find their roots in the study of dynamical systems, to the problem of tracing the parameter dependence
of bound and resonant states of the quantum mechanical Schr\"o\-din\-ger equation. We extend previous work on the subject \cite{Broeckhove2009} to systems of coupled equations.

Bound and resonant states of the Schr\"o\-din\-ger equation can be determined through the poles of the $S$-matrix, a quantity that can be derived from the asymptotic form of the wave function. We introduce a regularization procedure that essentially  transforms the $S$-matrix into its inverse and improves its smoothness properties, thus making it amenable to numerical continuation. This allows us to automate the process of tracking bound and resonant states when parameters in the Schr\"o\-din\-ger equation are varied. We have applied this approach to a number of model problems with satisfying results.
\end{abstract}

\pac{03.65.Ge, 03.65.Nk, 82.20.Xr, 47.20.Ky}
\keywords{Resonances, Numerical Continuation, Coupled Channels, Schr\"odinger equation}

\maketitle

%
%
\section{Introduction}\label{sec:Introduction}
The appearance of resonances is of ever-growing interest in the study of wave phenomena as they are considered among the most 
important features of systems described by wave equations. They appear in systems that are penetrable by an impacting wave. Such systems 
allow the interior field to couple to the external domain which leaves a characteristic fingerprint on the far-field pattern of the scattered wave. 
Many examples appear naturally in acoustic scattering~\cite{fahy2007sound} and in fluid-mechanical structure interaction~\cite{dhia2007resonances}. 
In all cases the appearance of resonant states has a profound and important influence on the system's dynamics.

In the context of quantum mechanical scattering, resonant behavior also strongly influences the interactions between microscopic particles, which in turn 
has its influence on the reactivity of molecules and atoms described by such quantum mechanical models.  In molecular systems these resonances can 
easily turn into bound states if the molecular configuration changes. 

In~\cite{Broeckhove2009} we developed a framework for applying numerical continuation techniques in the context of bound states and resonances 
in spherically symmetric short-range potentials. We have shown that numerical continuation methods, originally developed in the study of dynamical 
systems, can be applied successfully to track bound and resonant states efficiently in terms of a varying system parameter. Moreover, this technique 
can be used to reveal subtle and interesting transitions and connections between states automatically.

The present work focuses on the extension of that procedure to coupled channel short-range systems. This extension is a logical step towards automated, 
efficient and robust methods for the study of interactions in scattering experiments. In all generality, these techniques can be applied to systems of 
coupled Helmholtz equations with variable wave numbers, as long as the short-range conditions are met.

The outline of the paper is as follows. Section~\ref{sec:Scattering} sets the coupled channel Schr\"o\-din\-ger equations in the context of non spherically 
symmetric quantum mechanical problems. In section~\ref{sec:Regularization} a regularization procedure is discussed that allows application of numerical 
continuation even though the underlying functions that characterize resonances and bound states are numerically and analytically not very well-suited. 
Section~\ref{sec:NumCont} provides a brief overview of basic numerical continuation methods and gives some pointers on the available implementations. 
Finally, in section~\ref{sec:results} we present several results obtained with our implementation of the discussed methods.

%
%
\section{Quantum scattering in coupled channel problems}
\label{sec:Scattering}

%
In this paper we discuss a coupled channel problem that derives from a one particle Schr\"odinger equation with a non-spherical potential
\begin{equation}
	\label{eq:3dschrodinger}
  	\left(-\frac{1}{2\mu}{\Delta} + V(\mathbold{r},\lambda) - E  \right) \psi(\mathbold{r}) =0,  
\end{equation}
where $\Delta$ is the three-dimensional Laplacian, $V(\mathbold{r},\lambda)$ is a potential with a system parameter $\lambda$ and $E$ is the complex-valued energy of the system.
The problem is such that for all $\mathbold{r}$ outside a bounded domain $V(\mathbold{r},\lambda) \approx 0$, i.e.\ the potential becomes negligibly small. Formally, the limitation to potentials that are negligible outside a certain radius is termed as the restriction to so called short-range potentials: $V(\mathbold{r},\lambda)$ must decay faster than $r^{-3}$ as $r=|\mathbold{r}|\to\infty$ and must be less singular than $r^{-2}$ in the origin $r=|\mathbold{r}|\to0$~\cite{Taylor2006}. The long-range Coulomb interaction requires a significantly different approach and is not discussed here. The boundary conditions that are appropriate in the short-range case require the solution to be zero at the origin and force the solution to tend to a linear combination of free waves (i.e.\ solutions of \eqref{eq:3dschrodinger} for $V=0$) at infinity~\cite{Taylor2006}.

In this section we briefly introduce elements from quantum scattering and partial wave expansion to arrive at the concept of resonant states. In the subsequent sections we then focus on applying numerical continuation to study the dependence of such resonances on the system parameter $\lambda$ in \eqref{eq:3dschrodinger}. For this reason, we will faithfully record the $\lambda$ dependence in our notations, even if it is at times somewhat cumbersome.

Equation \eqref{eq:3dschrodinger} can be written in spherical coordinates $(r,\theta, \varphi)$ around the center of the system. The differential operator $\Delta$ then splits into angular and radial differential operators and the solution can be expanded as a sum
\begin{equation}
	\psi(\mathbold{r}) = \psi(r,\theta,\varphi) = \sum_{l=0}^{\infty}\sum_{m=-l}^{l} \psi_{lm}(r) Y_{l}^{m}(\theta, \varphi),
	\label{eq:PartialWave}
\end{equation}
where $Y_{l}^{m}(\theta,\varphi)$ are the spherical harmonics, eigenfunctions of the angular differential operators in the Laplacian.  
In physics this decomposition is commonly referred to as the partial wave expansion. The labels $(l,m)$ are intimately connected to the irreducible representations of the rotational symmetry groups $SO(3) \supset SO(2)$ of equation \eqref{eq:3dschrodinger}. Each $(l,m)$-component of \eqref{eq:PartialWave} is known as a partial wave and the system is said to be modeled by multiple partial wave channels. We are now interested in localizing the resonances in such systems.

\subsection{Single-channel scattering}
\label{sphericalsymmetry}
To introduce some notations and concepts we first briefly look at a spherically symmetric potential, i.e.\ $V(\mathbold{r},\lambda) = V(r,\lambda)$. In this case, the partial wave channels are decoupled and the radial wave function $\psi_{lm}$ is identical for all $m$. Hence we can drop the index $m$ and equation \eqref{eq:3dschrodinger} turns into:
\begin{equation}
	\label{eq:equationspherical}
	\left(-\frac{1}{2\mu} \frac{d^2}{d r^2} + \frac{l(l+1)}{2\mu r^2} - E \right) \psi_{l}(r,\lambda) + V(r,\lambda) \psi_{l}(r, \lambda) = 0,
\end{equation}
for each $l$. The boundary condition at $r=0$ specifies $\psi_{l}(r,\lambda) = 0$ and at large $r$ the solution must be a linear combination of the spherical Riccati-Hankel functions (see appendix \ref{app:spherical}), the free incoming and outgoing waves:
\begin{equation}
	\label{eq:sphericalboundary}
	\psi_{l}(r, \lambda) \xrightarrow{r\to\infty} \frac{i}{2} \left( \hat{h}^{-}_{l}(kr) - \hat{h}^{+}_{l}(kr) S_{l}(k, \lambda) \right),
\end{equation}
The spherical Riccati-Hankel functions are the solutions obtained in the absence of the potential term in \eqref{eq:equationspherical}, i.e.\ when the equation reduces to a Helmholtz equation with a constant wave number $k=\sqrt{2\mu E}$. The $S_{l}(k, \lambda)$ is called the $S$-matrix for channel $l$ and it determines the scattering properties associated with potential $V(\mathbold{r},\lambda)$.

\subsection{Multi-channel scattering}
For a non-spherical potential the partial wave channels do not decouple. The wave function must be represented by a sum as in equation \eqref{eq:PartialWave}, which is typically truncated at an $l_\text{max}$. In most cases of physical interest the potential still has axial symmetry. It is well known that as a consequence the channels with different $m$ are decoupled and the solutions can be represented by
\begin{equation}
	\psi(\mathbold{r}) = \sum_{l=0}^{l_\text{max}} \psi_{lm}(r) Y_{l}^{m}(\theta, \varphi).
	\label{eq:PartialWaveAxial0}
\end{equation}
Upon substitution in \eqref{eq:3dschrodinger}, this generates a separate set of equations for each m. For the purpose of our exposition we may, without loss of generality, take $m=0$ and drop the index $m$ in the notation for the radial wave function 
\begin{equation}
	\psi(\mathbold{r}) = \sum_{l=0}^{l_\text{max}} \psi_{l}(r) Y_{l}^{0}(\theta, \varphi).
	\label{eq:PartialWaveAxial}
\end{equation}
This significantly simplifies notations further on in the discussion. Lifting the restriction of axial symmetry presents no fundamental difficulty: the number of channels increases  and summations over $m$ have to be reintroduced. We will however consider only the axially symmetric case with $m=0$ as it provides a sufficient basis to demonstrate that numerical continuation can be used to track resonant states in multi-channel systems.

Insertion of \eqref{eq:PartialWaveAxial} in \eqref{eq:3dschrodinger}, followed by a projection onto each of the partial wave channels (by multiplying with $Y_{l'}^{0}$, integrating over the angular variables and using the orthonormality of the spherical harmonics) leads to a set of coupled radial equations 
\begin{equation}
	\label{eq:coupledchannel}
	\left(-\frac{1}{2\mu} \frac{d^2}{d r^2} + \frac{l(l+1)}{2\mu r^2} - E \right) \psi_{l}(r, \lambda) \ + \ \sum_{l'=0}^{l_\text{max}} V_{ll'}(r,\lambda) \psi_{l'}(r, \lambda) \ = \ 0,
\end{equation}
for each channel $l$ and where
\begin{equation}
	V_{ll'}(r,\lambda) = \int \sin\theta \, \text{d}\theta \, \text{d}\varphi \, {Y_{l}^{0}}^{*}(\theta, \varphi) V(r,\theta,\lambda) Y_{l'}^{0}(\theta,\varphi),
\end{equation}
and where $^{*}$ denotes the complex conjugate. For every partial wave $\psi_{l}$ the boundary condition at $r=0$ is still $\psi_{l}(r, \lambda) = 0$ while for large $r$ one formulates $l_{\text{max}}+1$ independent boundary conditions (one for each $l'$) 
\begin{equation}
	\label{eq:incomingpartial}
	\psi^{l'}_{l}(r, \lambda) \xrightarrow{r\to\infty} \frac{i}{2} \left( \hat{h}_{l}^{-}(kr) \delta_{ll'} - \hat{h}_{l}^{+}(kr) S_{ll'}(k, \lambda) \right),
\end{equation}
where $\delta_{ll'}$ is the usual Kronecker symbol.

Each boundary condition corresponds to an incoming wave in a different channel $l'$ and gives rise to an independent associated solution with its channel components labeled by $l$. A key difference with the spherical case is that, due to the non-spherical action of the potential, outgoing wave components appear in channels other than the incoming channel, as indicated by the $S$-matrix defined by its elements $S_{ll'}(k, \lambda)$.

We now rewrite the above in terms of matrices by introducing the square matrices $\mathbold{T}$, $\mathbold{V}$ and the identity matrix $\mathbold{I}$ with dimensions equal to the number of channels:
\begin{equation}
	\left(\mathbold{T}(r)\right)_{ll'} =  \delta_{ll'}\: \left(-\frac{1}{2\mu} \frac{d^2}{d r^2} + \frac{l(l+1)}{2\mu r^2} \right),\qquad \left(\mathbold{V}(r,\lambda)\right)_{ll'} = V_{ll'}(r,\lambda).
\end{equation}
Eq.\ \eqref{eq:coupledchannel} then is equivalent to
\begin{equation}
	\left( \mathbold{T}(r) - E\mathbold{I} + \mathbold{V}(r, \lambda) \right) \mathbold{\Psi}^{l'}(r, \lambda) = 0,
\end{equation}
where the wave function is a vector function with dimension equal to the number of channels, corresponding to a solution with \eqref{eq:incomingpartial} (easily rewitten in vector format). In multi-channel calculations it is however customary to introduce the square matrix $\mathbold{\Psi}$ whose columns are the $\mathbold{\Psi}^{l'} $ (corresponding to \eqref{eq:incomingpartial}) for each of the $l'$. This then leads to the set of coupled channel equations and boundary conditions that determine the full $S$-matrix which is the focus of our study
\begin{equation}
	\label{eq:matrixeq}
	\left( \mathbold{T}(r) - E\mathbold{I} + \mathbold{V}(r, \lambda)  \right) \mathbold{\Psi}(r, \lambda) = 0,
\end{equation}
\begin{equation}
	\label{eq:matrixboundary1}
	\mathbold{\Psi}(0, \lambda) = 0,
\end{equation}
\begin{equation}
	\label{eq:matrixboundary2}
	\mathbold{\Psi}(r, \lambda) \xrightarrow{r\to\infty} \frac{i}{2}  \left( \hat{\mathbold{h}}^-(kr) - \hat{\mathbold{h}}^+(kr) \mathbold{S}(k,\lambda) \right).
\end{equation}
The matrices $\hat{\mathbold{h}}^{\pm}(kr)$ are diagonal matrices with the incoming and outgoing spherical Riccati-Hankel functions on the diagonal
\begin{equation}
	\left(\hat{\mathbold{h}}^{-}(kr)\right)_{ll'} =  \delta_{ll'} \hat{h}_{l}^{-}(kr), \qquad \left(\hat{\mathbold{h}}^{+}(kr)\right)_{ll'} = \delta_{ll'} \hat{h}_{l}^{+}(kr).
\end{equation}
The multi-channel $S$-matrix $\mathbold{S}$ is the object of interest that we will use to find the resonant states. We can extract the $S$-matrix from the numerical solution of \eqref{eq:matrixeq} with its boundary conditions by evaluating the Wronskian in a point $r$ outside the range of the potential 
\begin{equation}
	\label{eq:Smatrix_and_wronskian}
 	\mathbold{S}(k,\lambda) = \mathcal{W}\left[\mathbold{\Psi}(r,\lambda),\hat{\mathbold{h}}^{+}(kr)\right]^{-1}\,\, \mathcal{W}\left[\mathbold{\Psi}(r,\lambda),\hat{\mathbold{h}}^{-}(kr)\right], 
\end{equation}
where $\mathcal{W}$ denotes the Wronskian of two matrices defined as
\begin{equation}
	\label{eq:Wronskian}
	\mathcal{W}[\mathbold{A},\mathbold{B}] = \mathbold{A} \left(\frac{d \mathbold{B}}{dr} \right)-\left(\frac{d \mathbold{A}}{dr} \right)\mathbold{B},
\end{equation}
with $\mathbold{A}$, $\mathbold{B} \in \mathbb{C}^{n\times n}$.

%
%
\section{Resonances and regularization}
\label{sec:Regularization}

The aim of this article is to develop a robust method to trace the parameter dependence of resonances that appear in multichannel problems.  We define a resonance or bound state to be the solution of equation \eqref{eq:3dschrodinger} for an energy $E\in\mathbb{C}$ for which the $S$-matrix has 
a pole. This definition is widely used and it is motivated, amongst others, in \cite{Newton1982,Taylor2006}. 

It is natural to locate the resonances numerically by searching for the zeros of $\mathbold{S}(k,\lambda)^{-1}$. 
However, it is not the best numerical strategy to search for these zeros directly. 
A key property of the $S$-matrix is the symmetry $\mathbold{S}(k^{*},\lambda) = \mathbold{S}(k,\lambda)^{-1}$, where $k^{*}$ denotes 
the complex conjugate of $k$. This makes $\mathbold{S}(k,\lambda)$ unitary for $k\in\mathbb{R}$. 
As a result a pole of the $S$-matrix at $k$ is always accompanied by a zero of $\mathbold{S}(k,\lambda)$ at $k^*$. 
This makes the search for the zeros of $\mathbold{S}(k,\lambda)^{-1}$ a numerically hard problem, particularly when $k \ll 1$. In that 
case $\mathbold{S}(k,\lambda)$ has zeros and poles close to each other and it does not satisfy the smoothness 
conditions required to ensure rapid convergence if, for instance, Newton iterations are used to find the zeros.  

For single-channel radial problems, as in section \ref{sphericalsymmetry}, it was proposed in \cite{Broeckhove2009} to use a regularized function rather than 
the $S$-matrix itself to converge and to track down bound states and resonances. This function was defined as, per $l$-channel,
\begin{equation}
	\label{eq:one_channel_regularized}
	F_{l}(k,\lambda) = \frac{k^{2l+1}}{S_l(k,\lambda)-1},
\end{equation}
with $S_{l}$ as in \eqref{eq:sphericalboundary}. It was shown that for $k\neq 0 $ this function has a zero iff $S_l(k,\lambda)$ has a pole. In addition this function is bounded 
as long as $\int_0^\infty dr\, \hat{j}_l(kr) V(r,\lambda) \psi_l(k,r) \neq 0$, which does not hold only for very specific potentials. 

In this paper we extend this result to the more general class of coupled channel problems described in the previous section. We introduce, in extension of \eqref{eq:one_channel_regularized}, 
the $(l_\text{max} + 1)$ by $(l_\text{max} + 1)$ square matrix function
\begin{equation}
	\label{eq:multi_channel_regularized}
	\mathbold{F}(k,\lambda) = \mathcal{K}
	\left( \mathbold{S}(k,\lambda) - \mathbold{I} \right)^{-1},
\end{equation}
where $\mathcal{K}$ is a diagonal matrix with elements $(\mathcal{K})_{ii} = k^{2l_{i}+1}$ on the diagonal and where we extract the $S$-matrix using \eqref{eq:Smatrix_and_wronskian} from the numerical solution of \eqref{eq:matrixeq}, \eqref{eq:matrixboundary1} and \eqref{eq:matrixboundary2}. The function used for continuation then is
\begin{equation}
	\label{eq:det_multi_channel_regularized}
	\det(\mathbold{F}(k,\lambda)) = \left( \prod_{l=0}^{l_{\text{max}}} k^{2l+1} \right) \bigg/ \det(\mathbold{S}(k,\lambda)-\mathbold{I}).
\end{equation}

In appendix \ref{app:regularization} we show that $\det(\mathbold{F}(k,\lambda))$ has a zero if and only if $\det(\mathbold{S}(k,\lambda)^{-1})$ has a zero. This means that we can look for the bound states and resonances via \eqref{eq:det_multi_channel_regularized} and investigate the application of continuation to equation $\det(\mathbold{F}(k,\lambda))=0$ . In addition, the coalescing of zeros and poles that occurs in the $S$-matrix is now absent. Unfortunately, we cannot derive a bound for $\det(\mathbold{F}(k,\lambda))$ itself as was possible for \eqref{eq:one_channel_regularized}. See appendix \ref{app:regularization} for details.

%
%

\section{Numerical Continuation methods}
\label{sec:NumCont}
In the study of dynamical systems one is often interested in the dependence and evolution of solutions of nonlinear 
problems in terms of some system parameter. Generally these  solution sets can only be approximated numerically and finding 
those requires efficient and robust techniques. More specifically, numerical continuation techniques are concerned 
with approximating the solution set $\left\{ \mathbold{x} | F(\mathbold{x}) = \mathbold{0} \right\}$ of an underdetermined system of nonlinear equations:
\begin{align}
	F : \mathbb{R}^{n+1} &\longrightarrow \mathbb{R}^{n} \\
	\mathbold{x} = (\mathbold{u},\lambda) &\longmapsto F(\mathbold{u},\lambda),
\end{align}
that depends on some system parameter $\lambda$. Such equations arise for instance in the study of dynamical systems 
or parametrized systems of ODEs but virtually every problem that can be written in the above formulation and that satisfies 
certain smoothness conditions can be subjected to those methods.

Due to the computationally intensive nature of many of these problems, efficiency is one of the main concerns in the numerical study of 
such dynamical systems. The number of evaluations of the function $F(\mathbold{x})$ should therefore be kept to a minimum. Additionally, 
solution sets often exhibit complex geometries with intersecting and bifurcating solution branches. Generally, the study of bifurcating 
branches involves rigorous stability analysis of the underlying solutions and is a complicated subject on its own. In this work the only 
type of bifurcations that can occur, are so called ``simple'' bifurcation points. These manifest themselves as two intersecting solution 
branches and are characterized by the dimension of the null-space of the Jacobian $F_{\mathbold{x}}(\mathbold{x}_{t})$ being 2 in 
a point $\mathbold{x}_t$, which is called a ``branch point''. In many problems, however, the bifurcations are much more involved and 
a thorough treatise on bifurcations and stability of solutions can be found in~\cite{Seydel1994,Doedel2007,Allgower1990,Keller1977}.

In the last few decades various techniques for numerical continuation have been developed. Many of those fall in the category of so 
called ``predictor-corrector'' methods: starting  from a known solution point $\mathbold{x}_{i}$ (such 
that $F(\mathbold{x}_{i})=\mathbold{0}$), the next one is found by first taking a guess based on the previous point and correcting 
that guess in a second step. As such, starting from one known initial point $\mathbold{x}_{0}$ in the solution set an approximation 
$\left\{ \mathbold{x}_{i} | i=0,\ldots,n \text{ with } F(\mathbold{x}_{i})=\mathbold{0} \right\}$ of that solution set is constructed by finding 
subsequent points on the curve.

Many choices can be made for the implementation of both steps. A good overview can be found in \cite{Seydel1994,Doedel2007,Allgower1990,Keller1977}. 
The popular ``pseudo arclength'' method \cite{Keller1969} takes a predictor-corrector approach. The predictor step uses the tangent $\mathbold{t_{i}}$ to the solution curve at $\mathbold{x}_{i}$ to construct 
a prediction $\mathbold{x}_{i+1}^{p}$ for the next point $\mathbold{x_{i+1}}$. The corrector step consists of Newton iterations on the system of equations $F = \mathbold{0}$ 
augmented with an additional equation that constrains the corrector step to a plane orthogonal to the tangent $\mathbold{t_{i}}$. The Newton iterations then converge to the intersection of that plane and the solution curve, guaranteeing a unique next point $\mathbold{x_{i+1}}$ in the continuation  \cite{Keller1969}.

The pseudo arclength algorithm has been implemented in the AUTO \cite{Doedel1981,AUTO2007} library which we have used in this work. 
Other well-known implementations of numerical continuation algorithms include LOCA~\cite{Salinger2002} which is a part of the Trilinos 
framework~\cite{Heroux2005}, \textsc{Matcont}~\cite{Dhooge2006} (a Matlab implementation) and Multifario~\cite{Henderson2002} 
which allows multi-pa\-ra\-me\-ter continuation.

%
%

\section{Numerical Results}
\label{sec:results}
The results of this section have been obtained with a renormalized Numerov solver \cite{Johnson1977,Johnson1978} for the calculation of the coupled channel solutions of equation \eqref{eq:matrixeq}. 
However, these results are independent of the particular numerical method used to solve the equations as long as the method is able to extract the $S$-matrix e.g. through equation \eqref{eq:Smatrix_and_wronskian}. 
Indeed, the regularized function $\det(\mathbold{F}(k,\lambda))$ is the only required input for the continuation routines.

To demonstrate our approach we have applied numerical continuation to a number of model problems involving well-behaved short-range potentials. The numerical implementation of the 
continuation algorithm  itself is provided by the AUTO-07p Fortran library \cite{AUTO2007}. The coupled channel renormalized Numerov solver and the code to obtain the regularized 
function $\det(\mathbold{F}(k,\lambda))$ are used as driver routines for AUTO and we have implemented them in C++.

\subsection{Comparison with analytically solvable models}
\label{sec:model1}
As a first testing example for the study of numerical code and regularization we have compared our results thoroughly with a model problem from \cite{Vanroose2002}. There, among 
other examples, a short-range two-channel coupling model for the low-energy electron-CO$_{2}$ scattering is studied. An analytically solvable expression with square-well potentials 
is presented together with resulting bound/resonant pole trajectories.

The application of numerical continuation on both the analytical expressions and the numerical solutions of this model matched the results previously published by the authors. 

\subsection{Gauss potential with $s$-wave and $p$-wave coupling}
\label{sec:model2}
In this section we look at a system of two coupled Gaussian potential wells with angular momenta $l_{1}=0$ (referred to as $s$-wave in physics) and $l_{2}=1$ (referred to as $p$-wave). The mass is chosen to be $\mu=1$.
The two channel potentials and the coupling potential are set to:
\[
	V_{ij}(r,\lambda) = -\lambda_{ij}e^{-r^{2}},
\]
where $i,j\in\{1,2\}$ and $V_{ij}(r,\lambda) = V_{ji}(r,\lambda)$.

Let us first consider each of the two channels separately, i.e.\ uncoupled ($\lambda_{12}=0$). Setting $\lambda_{11}$, the strength of the first-channel potential, to 7 gives a 
system with one bound state shown in table~\ref{tab:gauss_bs}\subref{tab:gauss_l0_bs}. Similarly, bound states in the second channel with $\lambda_{22}=20$ are shown in 
table~\ref{tab:gauss_bs}\subref{tab:gauss_l1_bs}.
\begin{table}[h!]
	\centering
	\subtable[First channel, $\lambda_{11}=7$]{
		\begin{tabular}{r|r|r}
			$i$ & $k_{i}$ (\textsc{Newton}) & $k_{i}$ (\textsc{Matslise}) \\
			\hline
			0 & \texttt{2.185562e+00i} & \texttt{2.185562e+00i}
		\end{tabular}
		\label{tab:gauss_l0_bs}
	}
	\hspace{0.5cm}
	\subtable[Second channel, $\lambda_{22}=20$]{
		\begin{tabular}{r|r|r}
			$i$ & $k_{i}$ (\textsc{Newton}) & $k_{i}$ (\textsc{Matslise}) \\
			\hline
			0 & \texttt{3.617543e+00i} & \texttt{3.617478e+00i} \\
			1 & \texttt{8.938842e-01i} & \texttt{8.938842e-01i}
		\end{tabular}
		\label{tab:gauss_l1_bs}
	}
	\caption{Bound states of uncoupled channels in the $s$- and $p$-wave example, obtained with (1) Newton iterations 
			on $\det(\mathbold{F}(k,\lambda))$ (\textsc{Newton}) and (2) with \textsc{Matslise}.}
	\label{tab:gauss_bs}
\end{table}
These results were calculated through Newton iterations on the function $\det(\mathbold{F}(k,\lambda))$. They compare, see 
table \ref{tab:gauss_bs}, favorably with results obtained with \textsc{Matslise}~\cite{Ledoux2005}. 

\begin{table}[h!]
	\centering
	\begin{tabular}{r|r|r|r}
		$i$ & $k_{i}$ (\textsc{Newton}) & $k_{i}$ (\textsc{Matscs}) & $k_{i}$ (\textsc{fin.\ diff.}) \\
		\hline
		0 & \texttt{3.623677e+00i} & \texttt{3.623551e+00i} & \texttt{3.623677e+00i} \\
		1 & \texttt{2.178012e+00i} & \texttt{2.178012e+00i} & \texttt{2.195645e+00i} \\
		2 & \texttt{9.035406e-01i} & \texttt{9.035275e-01i} & \texttt{9.039594e-01i}
	\end{tabular}
	\caption{Coupled channel bound states in the $s$- and $p$-wave example, with Newton iteration (\textsc{Newton}), 
			coupled channel extension of CPM\{$P$,$N$\} (\textsc{Matscs}) and finite differences (\textsc{fin.\ diff.})}
	\label{tab:gauss_l0_l1_c05_bs}
\end{table}
The coupled system on the other hand has bound states at positions deviating from those in table~\ref{tab:gauss_bs} as the coupling strength increases. In this case, with $\lambda_{12} = 0.5$, our results, see table~\ref{tab:gauss_l0_l1_c05_bs}, are in close agreement with 
those obtained through the recently extended CPM\{$P$,$N$\} method for coupled channel systems \cite{Ledoux2006,Ledoux2007} 
implemented in \textsc{Matscs}~\cite{Matscs}. Reference values obtained with eigenvalue calculations based on finite difference discretization, 
although not very accurate, are also provided.

It is now interesting to see how these bound states evolve in terms of variations in $\lambda_{22}$. Applying numerical continuation with 
starting values $k_{i}$ (\textsc{Newton}) taken from table~\ref{tab:gauss_l0_l1_c05_bs} gives results shown in figure~\ref{fig:gauss_l0_l1_c05_continuation}. 
The continuation was performed both in the direction of the origin and away to show as much of the resonant trajectories as possible.

Details of the underlying numerical procedures are as follows. The integration interval of the coupled channel Numerov method spans from 0 to 4.6 and 
contains 4096 points. The tolerance for Newton iterations used by AUTO is of the order $10^{-6}$ whereas the length of the continuation steps might vary adaptively between $10^{-2}$ and $10^{-4}$.

Each of the three continuation branches originates from a bound state and follows a path towards the resonant region (i.e.\ the lower half of the complex $k$-plane) 
as the potential strength $\lambda_{22}$ is decreased. Near the origin of the complex plane the second channel potential becomes too weak to hold the bound state 
and reaches a threshold value where the state transforms into a resonance. Looking at the continuation curves in figure~\ref{fig:gauss_l0_l1_c05_continuation} we 
can therefore distinguish three regions: the bound state regime on the positive imaginary axis, the transition into a resonance around the origin and the resonant 
regime below the real axis. Their properties are summarized as follows. (1) The bound states are situated on the positive imaginary axis and correspond to real, 
negative (rel.\ to potential asymptote) energies. In the neighborhood of the bound states of the first channel, in this case $k\approx 2.2i$, ``avoided crossings'' can 
be seen where a relatively big decrease in $\lambda_{22}$ has a minor influence on $k_{i}$. However, as soon as the lower state $k_{i-1}$ approaches, the 
higher one gets pushed away from this position and makes place for $k_{i-1}$. (2) As $\lambda_{22}$ passes certain threshold values, the potential becomes 
too weak to hold a bound state $k_{i}$ which transitions into a resonance. For $s$-waves these transitions happen on the negative imaginary axis not far from 
the origin, whereas for higher angular momenta this branching occurs exactly at 0. In the case of $s$/$p$-wave coupling, both effects influence each other 
depending on the coupling strength and bound states of the separate channels. The threshold values for $\lambda_{22}$ with the corresponding branch 
points for the three states in our example are summarized in table~\ref{tab:branchpoints}. (3) In the resonant regime ($\text{Im}(k)<0$) the states gain a real part which 
corresponds to a positive scattering energy. The negative imaginary part of these states is related to the time delay in the time-dependent picture. These 
branches are symmetric but only those in the fourth quadrant have physical significance. On the negative imaginary axis one also finds
so called virtual states.

\begin{table}[h!]
	\centering
	\begin{tabular}{r|r|r|r}
		$i$ & $\text{Re}(k_{i})$ & $\text{Im}(k_{i})$ & $\lambda_{22}^{t}$ \\
		\hline
		$1$ & \texttt{-1.227997e-12} & \texttt{-7.999927e-04} & \texttt{6.091215e+00} \\
		$2$ & \texttt{6.280509e-13} & \texttt{-1.589331e-02} & \texttt{1.742094e+01}
	\end{tabular}
	\caption{Branch points at which bound states transition into a resonance for the Gauss $s$/$p$-wave example. The real part of $k$ is negligible whereas the imaginary part indicates slight off-zero branching discussed in~\ref{sec:model2}.}
	\label{tab:branchpoints}
\end{table}

The proposed regularization \eqref{eq:det_multi_channel_regularized} plays a significant role near and at threshold values. Indeed, we have proven in appendix~\ref{app:regularization} the absence of 
additional, spurious zeros of $\det(\mathbold{F}(k,\lambda))$. All $\lambda_{22}$-dependent poles (which would otherwise coalesce with zeros in the origin at threshold values) are 
eliminated as well which makes continuation through the origin possible. We can not, however, guarantee smoothness of $\det(\mathbold{F}(k,\lambda))$ 
over the whole domain as demonstrated by the two stationary poles in figure~\ref{fig:gauss_l0_l1_c05_smatrix}, where we show a comparison 
of $\det(\mathbold{S}(k,\lambda))$ and $\det(\mathbold{F}(k,\lambda))$.

\begin{figure}[h!]
	\centering
	\subfigure[$\text{Im}(k)\times\text{Re}(k)$ projection]{
		\includegraphics[height=9cm]{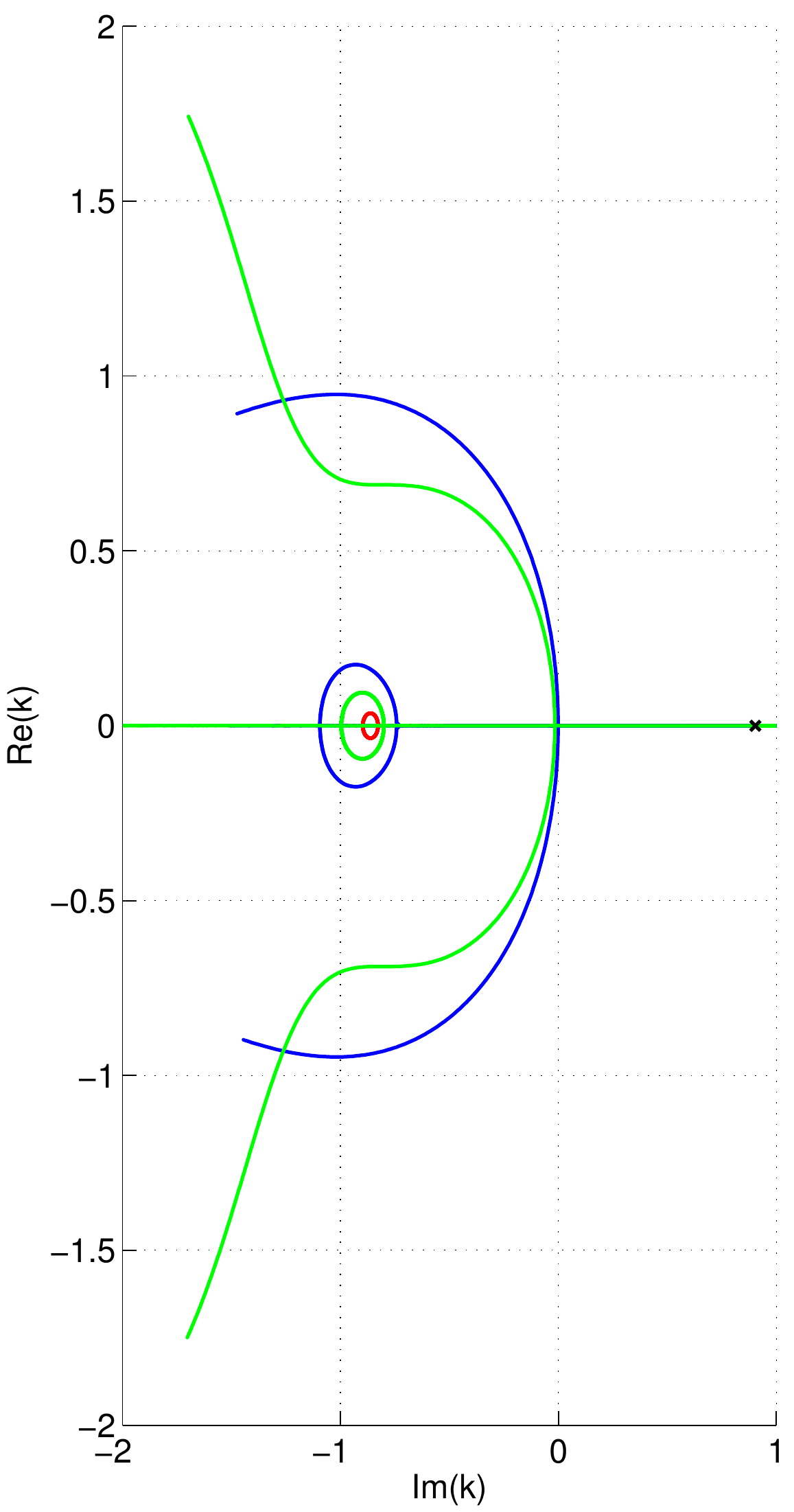}
		\label{fig:gauss_l0_l1_c05_top2}
	}
	\subfigure[$\text{Im}(k)\times\lambda_{22}$ projection. Bifurcation points are indicated with arrows.]{
		\includegraphics[height=9cm]{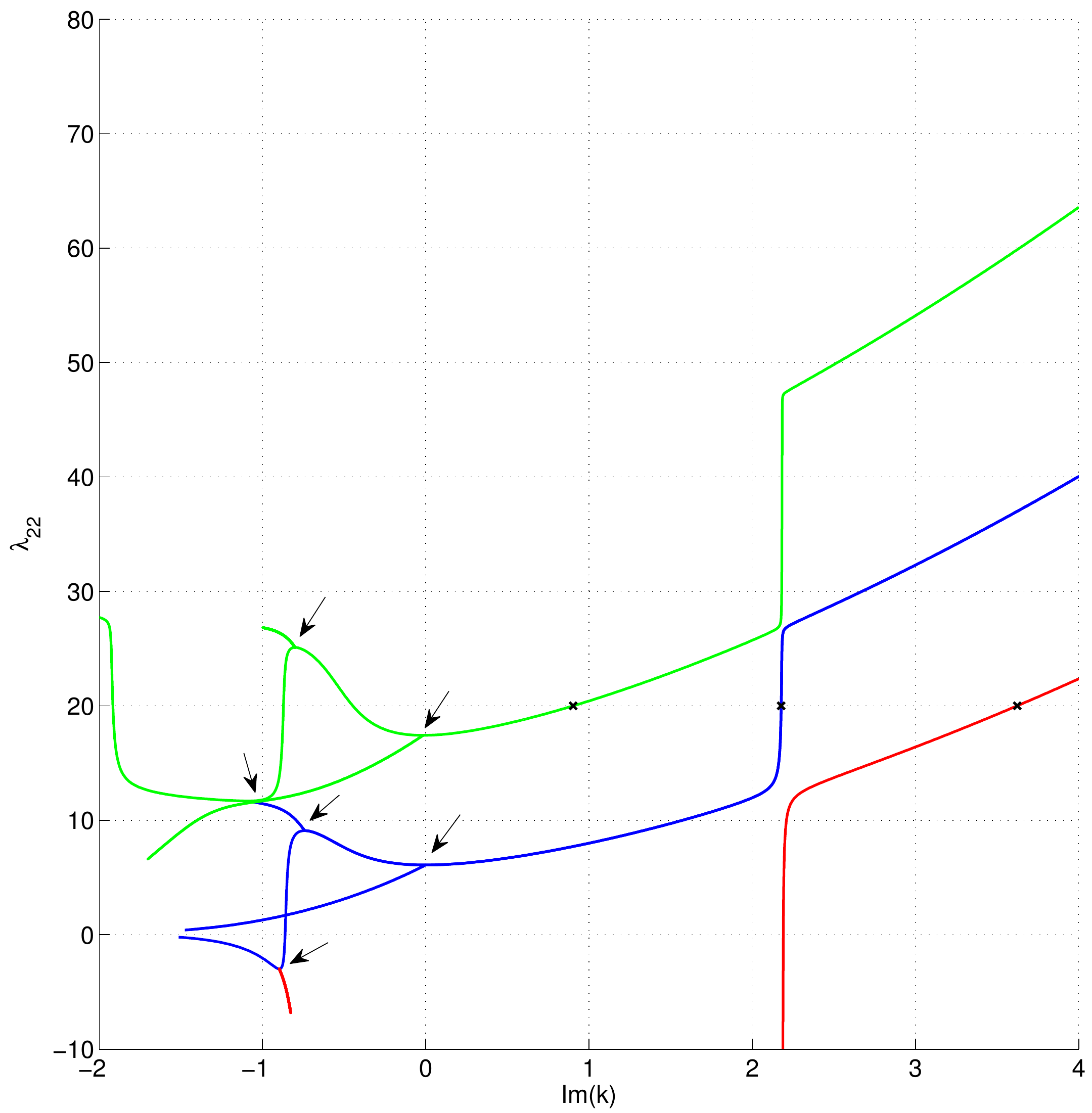}
		\label{fig:gauss_l0_l1_c05_side1}
	}
	\\
	\subfigure[Continuation curves in full space with projections along the sides.]{
		\includegraphics[height=11cm]{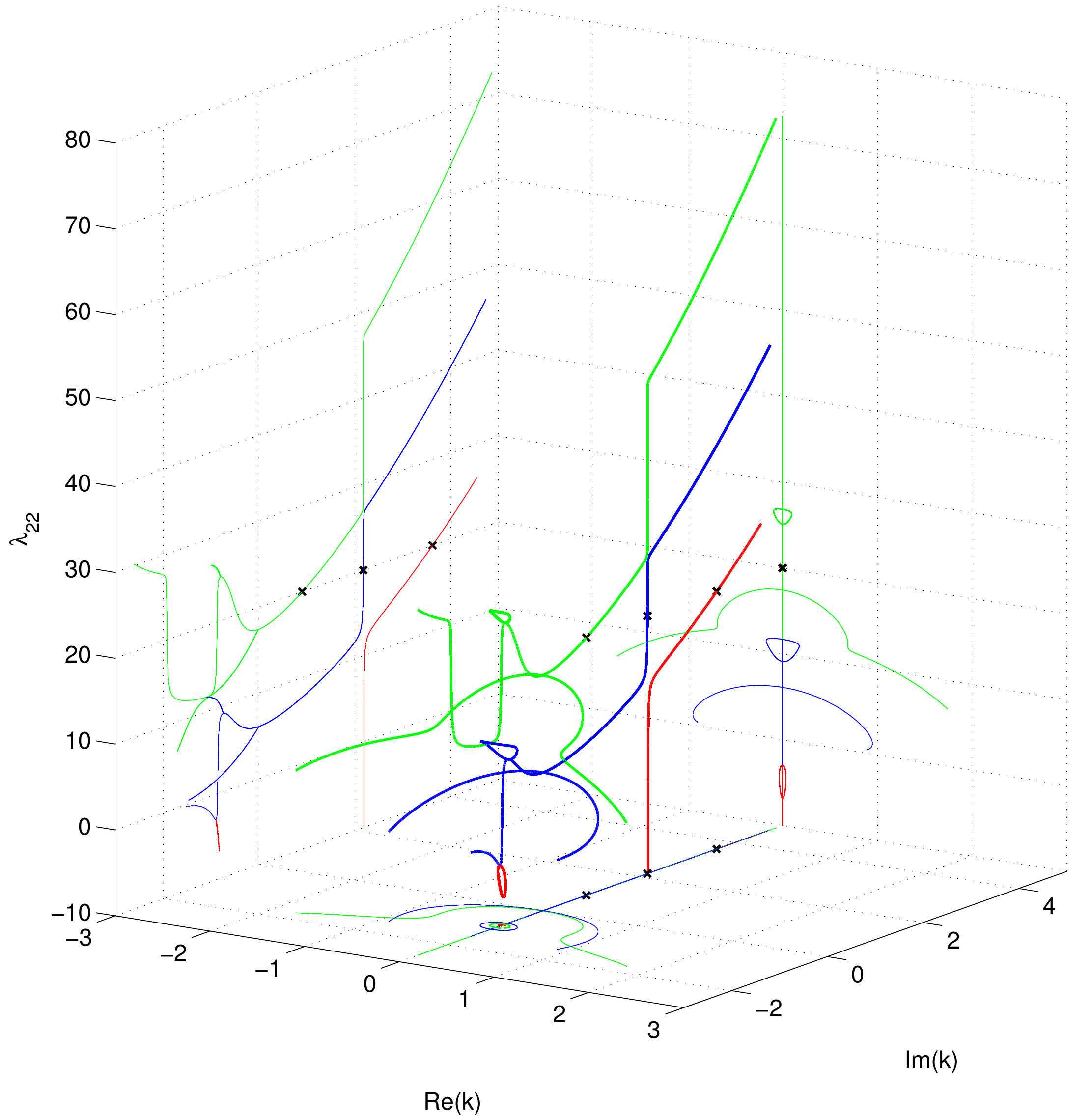}
		\label{fig:gauss_l0_l1_c05_persp}
	}
	\caption{Continuation curves of the Gauss $s$/$p$-wave coupling example. Branches of different colors originate from different starting points 
	(drawn as black crosses) in table~\ref{tab:gauss_l0_l1_c05_bs}: $i=0$ (red), $i=1$ (blue), $i=2$ (green)}
	\label{fig:gauss_l0_l1_c05_continuation}
\end{figure}

\begin{figure}[h!]
	\centering
	\begin{minipage}[t]{0.45\linewidth}
		\centering
		\subfigure{\includegraphics[height=3.2cm]{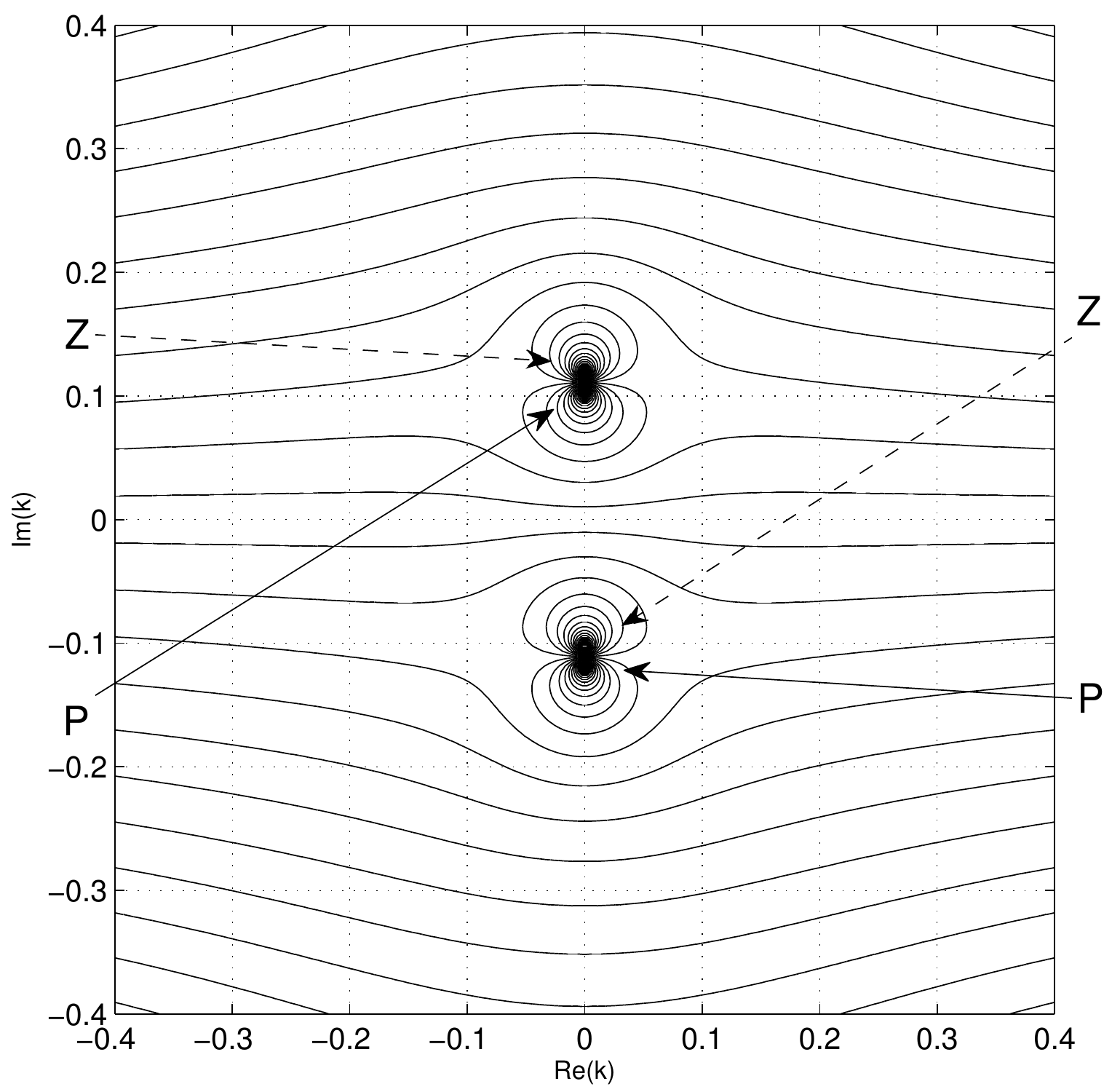}}
		\subfigure{\includegraphics[height=3.2cm]{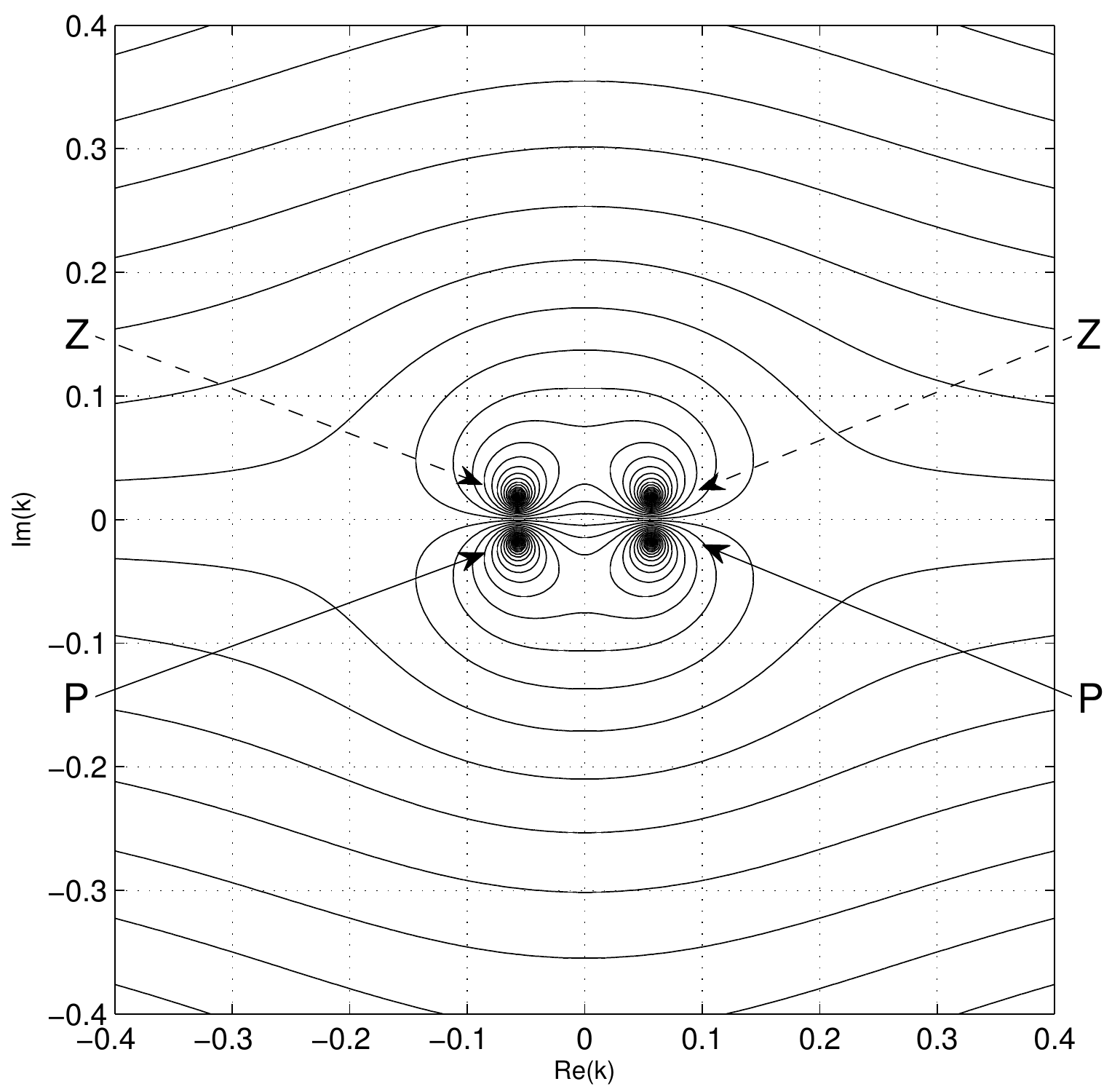}}\\
		\subfigure{\includegraphics[height=3.2cm]{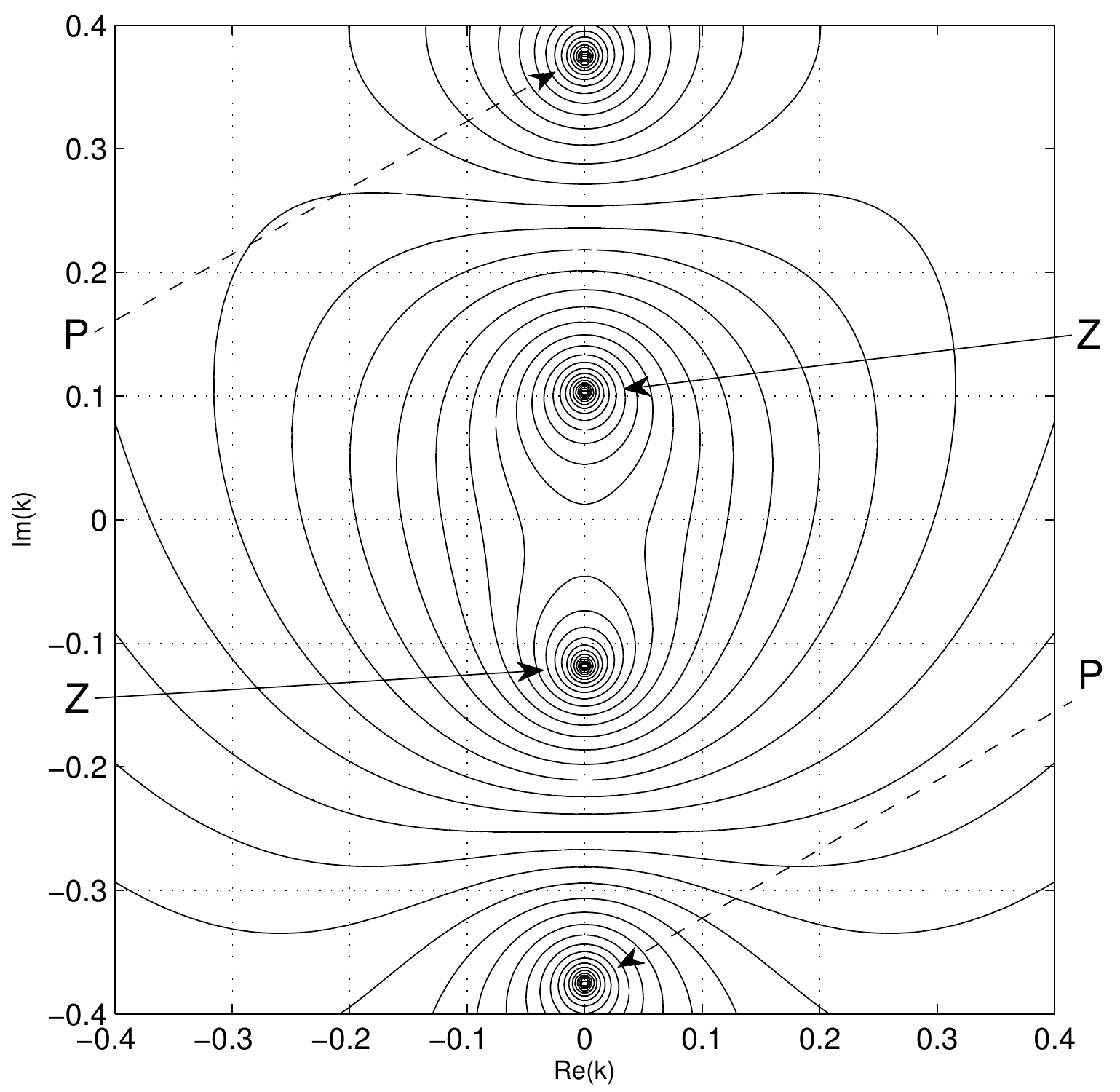}}
		\subfigure{\includegraphics[height=3.2cm]{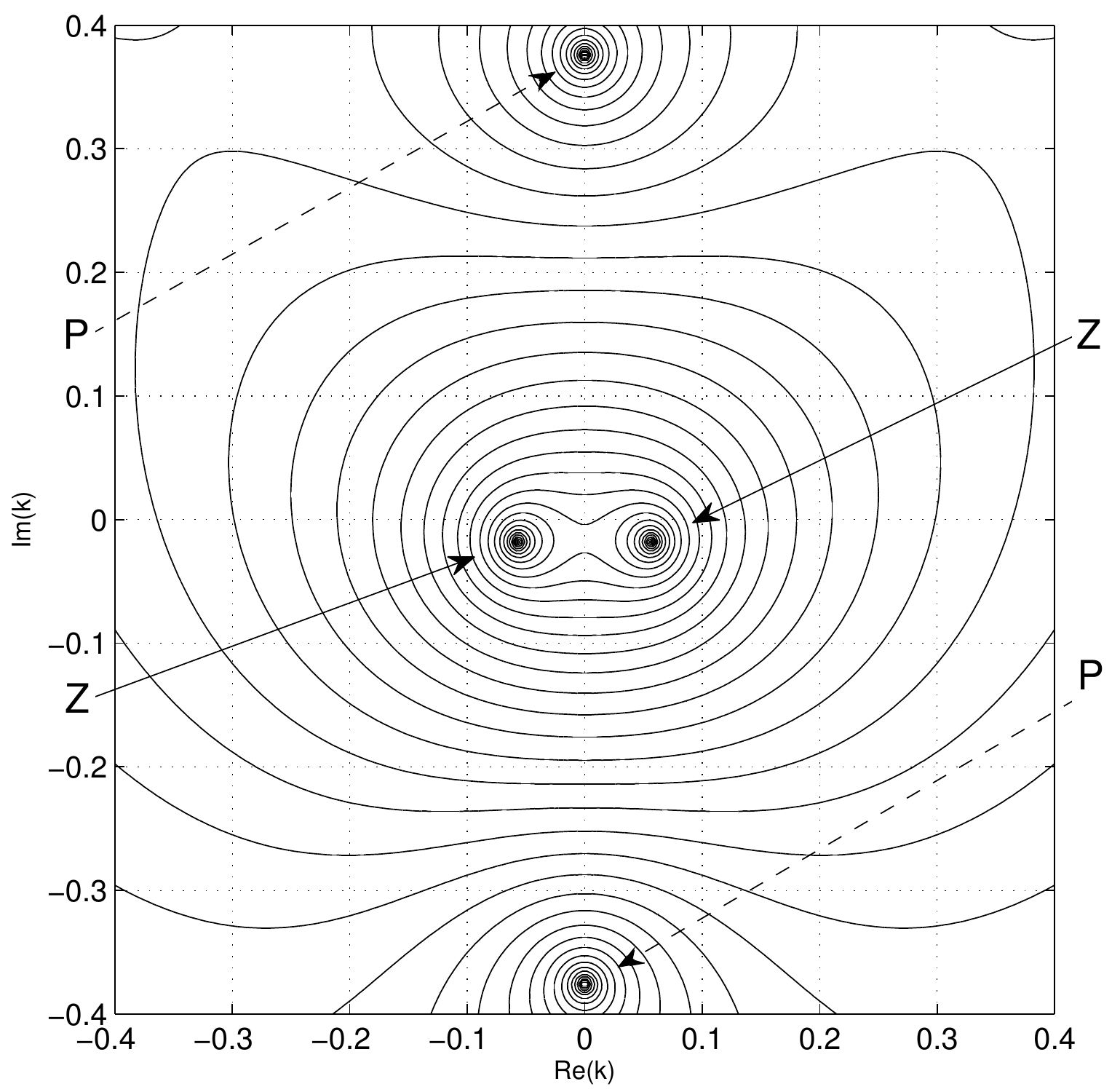}}
		\caption{Contour plots of $|\det(\mathbold{S})|$ (upper row) and $|\det(\mathbold{F})|$ (lower row) in the Gauss $s$/$p$-wave example (logarithmic scaling was applied to ``flatten'' the images). $\lambda_{12}=0.5$, $\lambda_{11}=7$ and $\lambda_{22}=17.5$ (left) and $17.4$ (right). As $\lambda_{22}$ is decreased, poles of the $S$-matrix move from the bound state and virtual state region through the origin of the complex $k$-plane into the resonant regime ($i=2$ state becomes resonant). Arrows indicate poles (P) and zeros (Z).}
		\label{fig:gauss_l0_l1_c05_smatrix}
	\end{minipage}
	\hspace{0.5cm}
	\begin{minipage}[t]{0.45\linewidth}
		\centering
		\subfigure{\includegraphics[height=3.2cm]{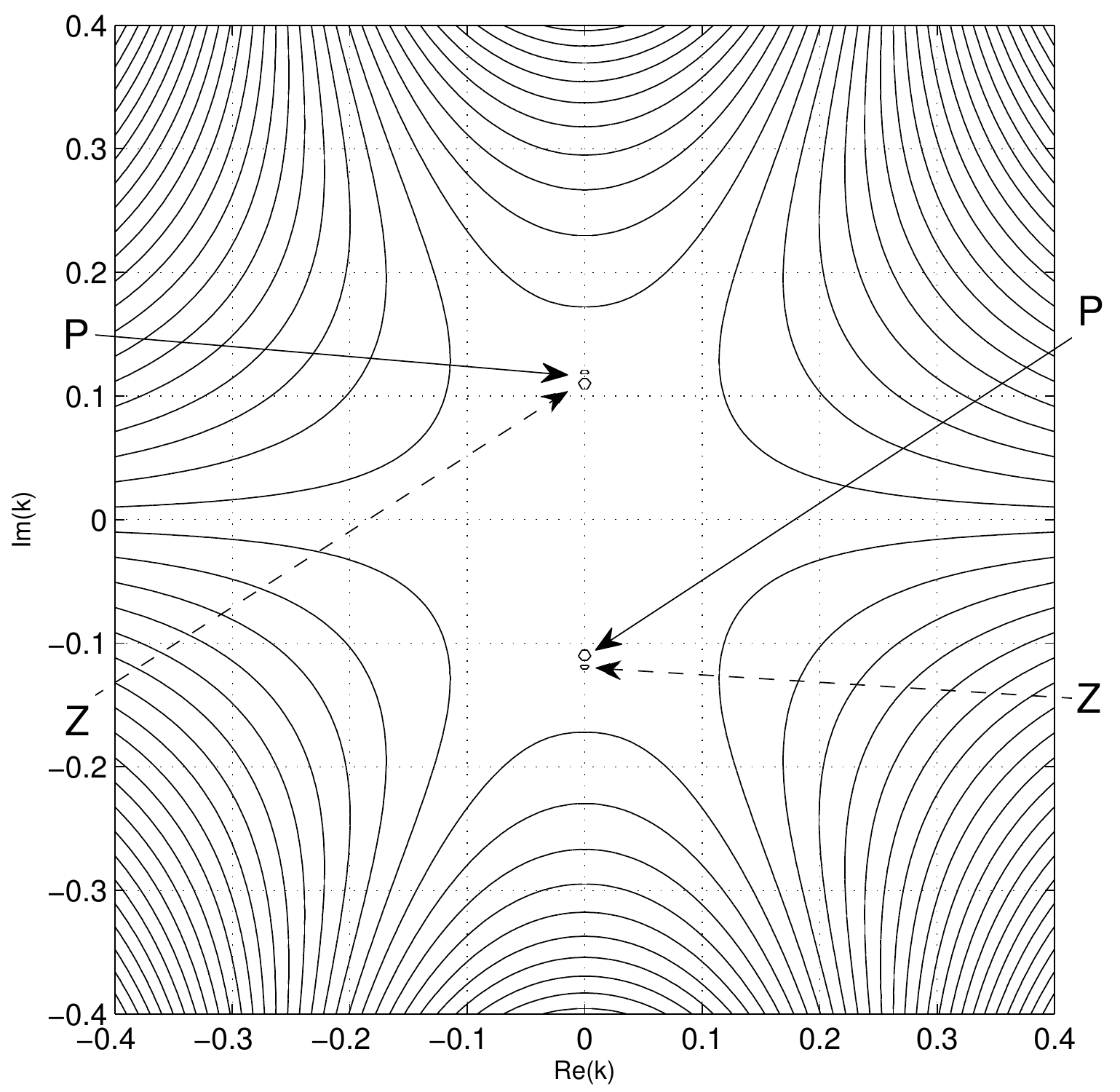}}
		\subfigure{\includegraphics[height=3.2cm]{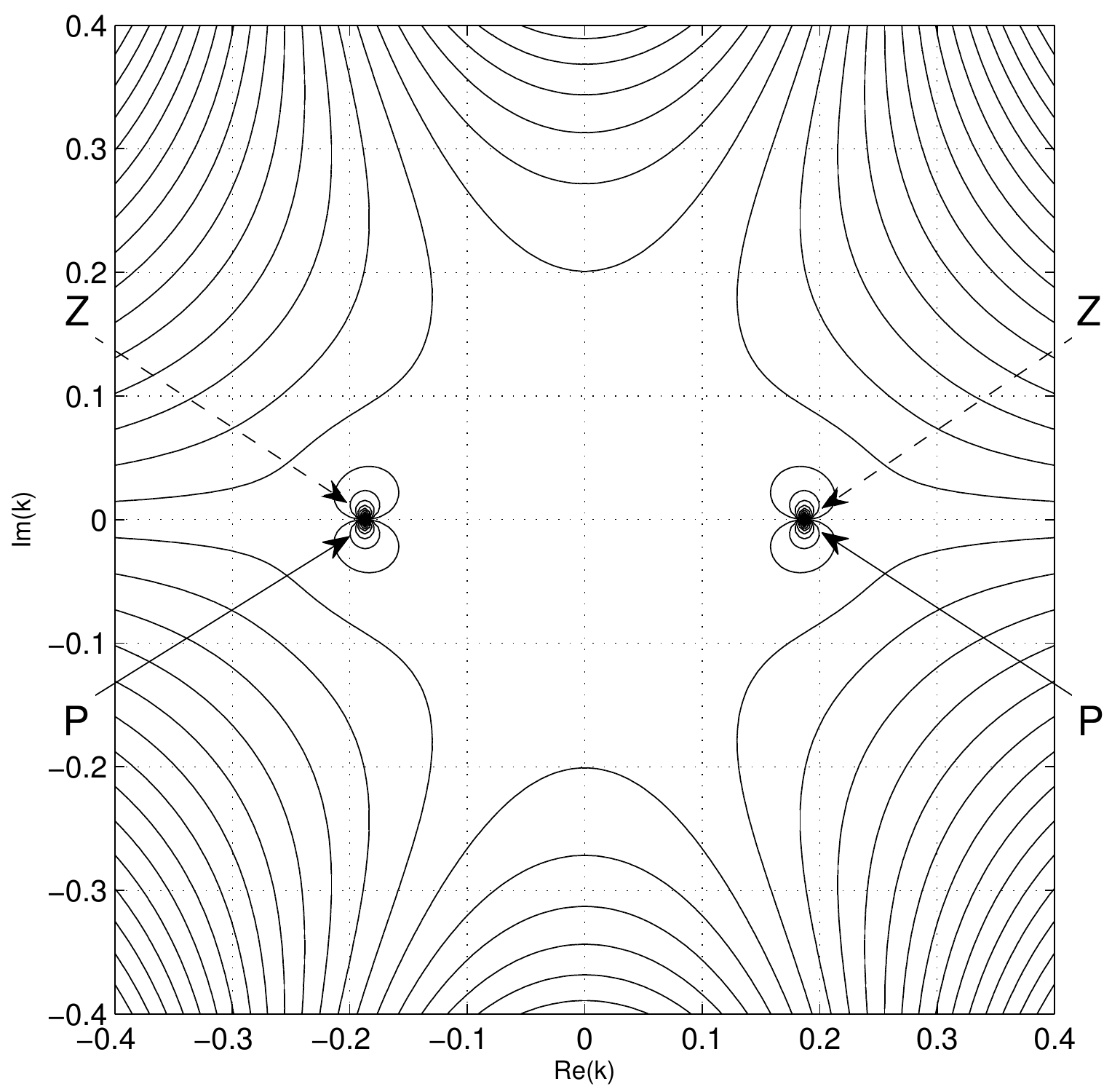}}\\
		\subfigure{\includegraphics[height=3.2cm]{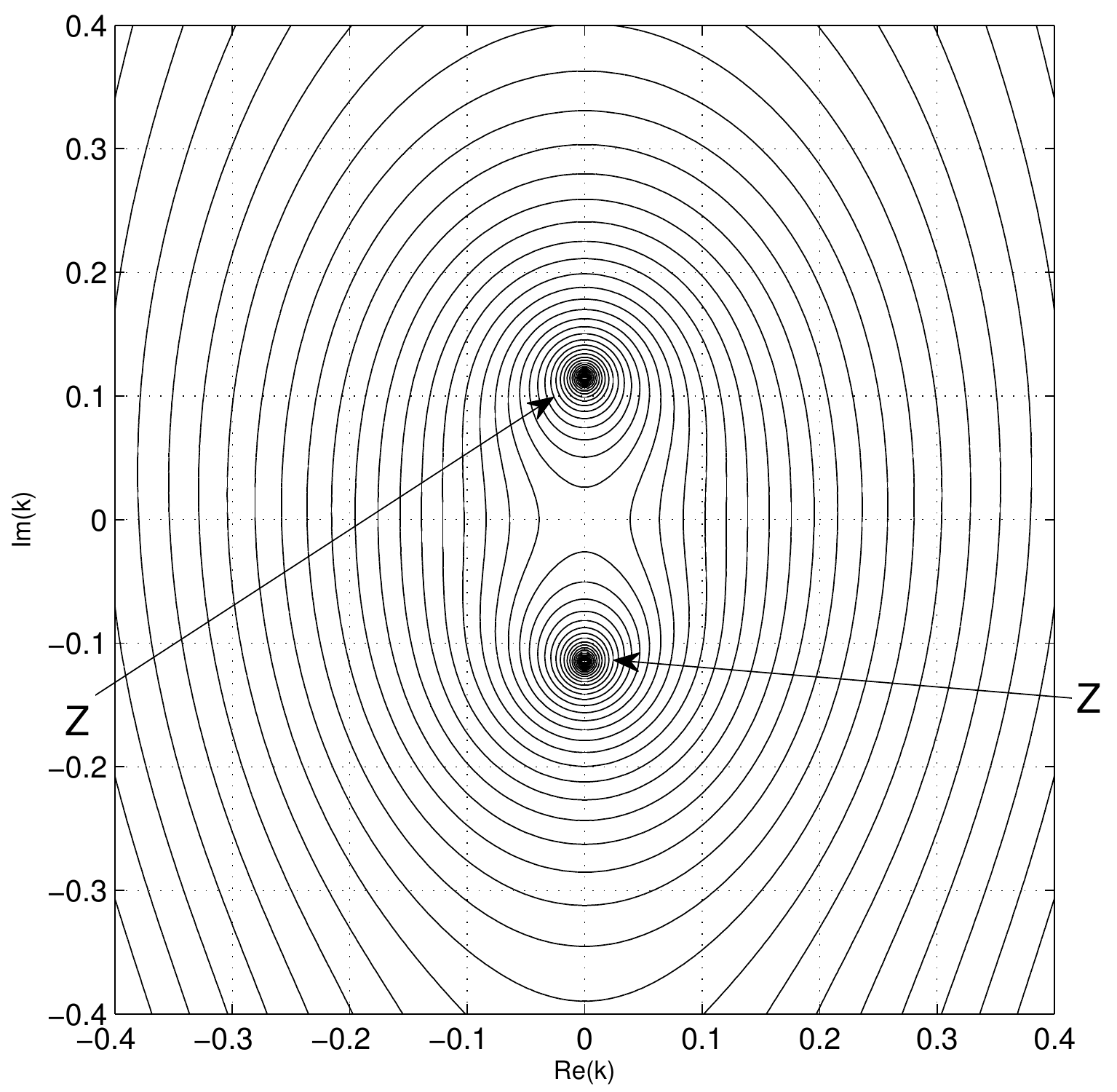}}
		\subfigure{\includegraphics[height=3.2cm]{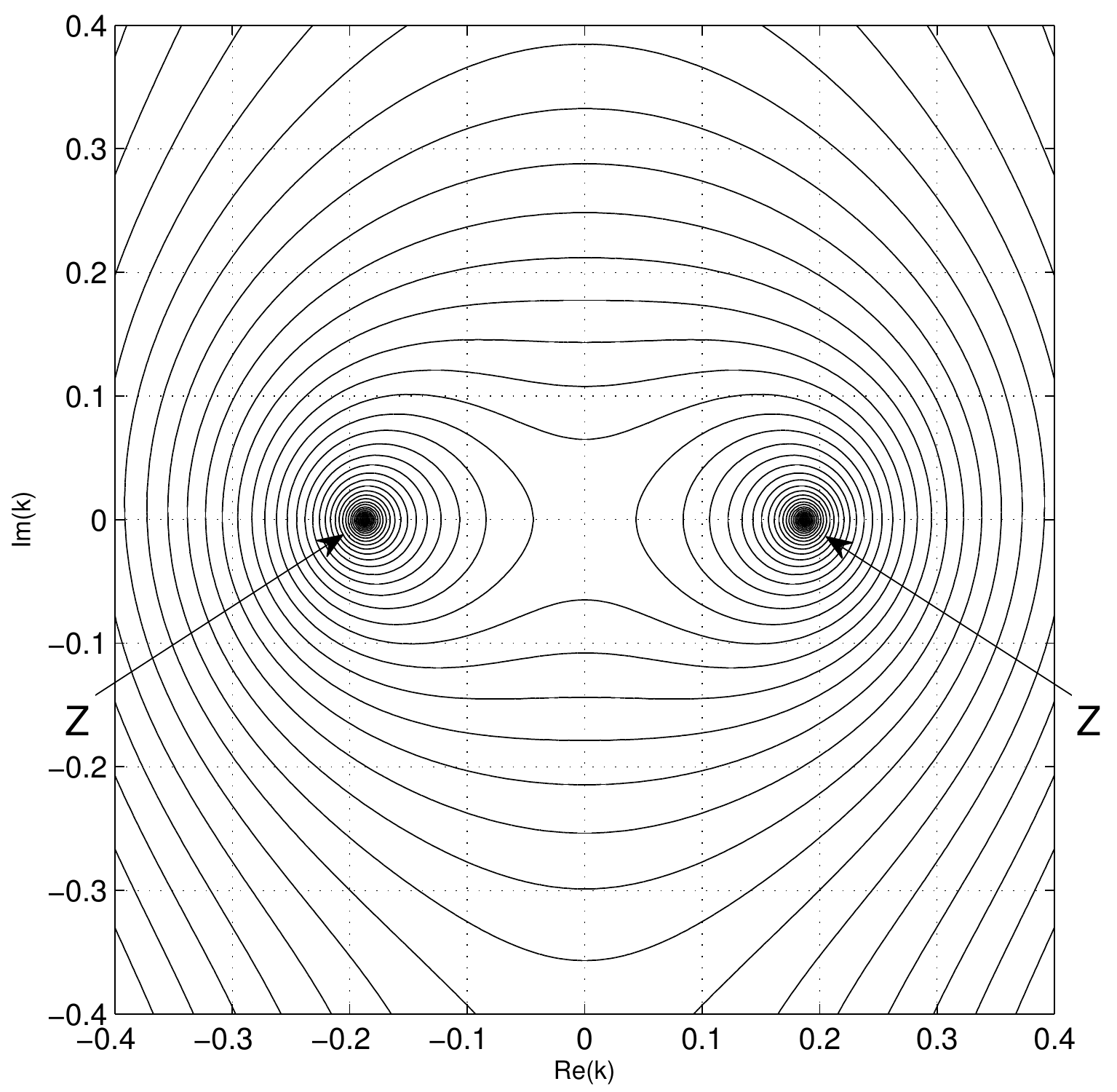}}
		\caption{Contour plots of $|\det(\mathbold{S})|$ (upper row) and $|\det(\mathbold{F})|$ (lower row) in the Gauss $p$/$d$-wave example (logarithmic scaling was applied to ``flatten'' the images). $\lambda_{12}=0.3$, $\lambda_{11}=10$ and $\lambda_{22}=13.5$ (left) and $13.4$ (right). As $\lambda_{22}$ is decreased, poles of the $S$-matrix move from the bound state and virtual state region through the origin of the complex plane into the resonant regime ($i=1$ state becomes resonant). Arrows indicate poles (P) and zeros (Z).}
		\label{fig:gauss_l1_l2_c03_smatrix}
	\end{minipage}
\end{figure}

\subsection{Gauss potential with $p$-wave and $d$-wave coupling}
\label{sec:model3}
The second model problem also uses Gauss potentials in both channels as well as the coupling and $\mu=1$. However, angular momenta for channel 1 and 2 were chosen to be $l_{1}=1$ and $l_{2}=2$. This choice is motivated by the previously mentioned distinction between $s$-waves and waves with higher angular momenta in \ref{sec:model2}: In the $l=0$ case the bifurcation does not take place in the origin of the $k$-plane but on the imaginary axis below the origin, whereas for $l>0$ it does branch off in the origin \cite{Taylor2006}. By taking both angular momenta higher than zero we enforce branching of the coupled channel system in the origin and therefore an essentially different geometry for testing the regularized formulation than in the previous example.

Setting $\lambda_{11}=10$, $\lambda_{22}=30$ and $\lambda_{12}=0.3$ gives a system with 3 bound states. The approximations, analogous to the first example, are given in table~\ref{tab:gauss_l1_l2_c03_bs}.
\begin{table}[h!]
	\centering
	\begin{tabular}{r|r|r|r}
		$i$ & $k_{i}$ (\textsc{Newton}) & $k_{i}$ (\textsc{Matscs}) & $k_{i}$ (\textsc{fin.\ diff.}) \\
		\hline
		0 & \texttt{3.796532e+00i} & \texttt{3.796472e+00i} & \texttt{3.796532e+00i} \\
		1 & \texttt{1.600083e+00i} & \texttt{1.600083e+00i} & \texttt{1.600127e+00i} \\
		2 & \texttt{6.599123e-01i} & \texttt{6.599112e-01i} & \texttt{6.603354e-01i}
	\end{tabular}
	\caption{Bound states of the coupled channel $p$- and $d$-wave example. We compare values obtained by Newton iterations (\textsc{Newton}) on $\det(\mathbold{F}(k,\lambda))$, (\textsc{Matscs}) and finite differences (\textsc{fin. diff.})}
	\label{tab:gauss_l1_l2_c03_bs}
\end{table}

Again, these values were used as starting points for numerical continuation. The resulting curves are shown in figure~\ref{fig:gauss_l1_l2_c03_continuation}. The details of the coupled channel renormalized Numerov solver and AUTO continuation parameters were set to the same values as in the first example.

Also in this case the proposed regularization provides a viable way of removing poles in the origin at thresholds, hereby allowing continuation of the relevant zeros through the origin. This is shown in figure~\ref{fig:gauss_l1_l2_c03_smatrix}. This figure also shows the sharpness of nearby poles and zeros of $\det(\mathbold{S}(k,\lambda))$ compared to much smoother behavior of $\det(\mathbold{F}(k,\lambda))$ which positively influences the convergence of Newton iterations in the continuation process.

\begin{figure}[h!]
	\centering
	\subfigure[$\text{Im}(k)\times\text{Re}(k)$ projection]{
		\includegraphics[height=9cm]{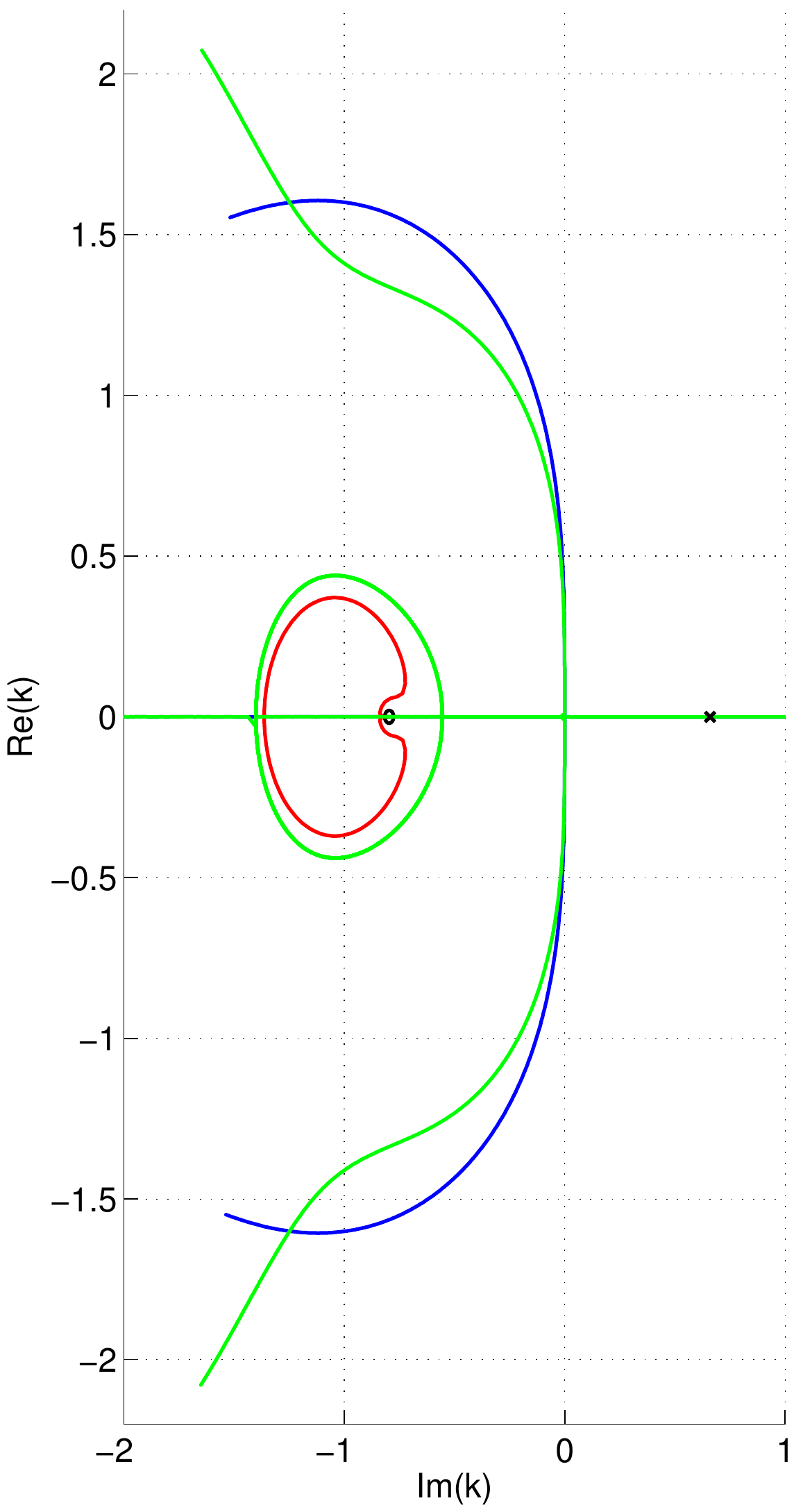}
		\label{fig:gauss_l1_l2_c03_top2}
	}
	\subfigure[$\text{Im}(k)\times\lambda_{22}$ projection. Bifurcation points are indicated with arrows.]{
		\includegraphics[height=9cm]{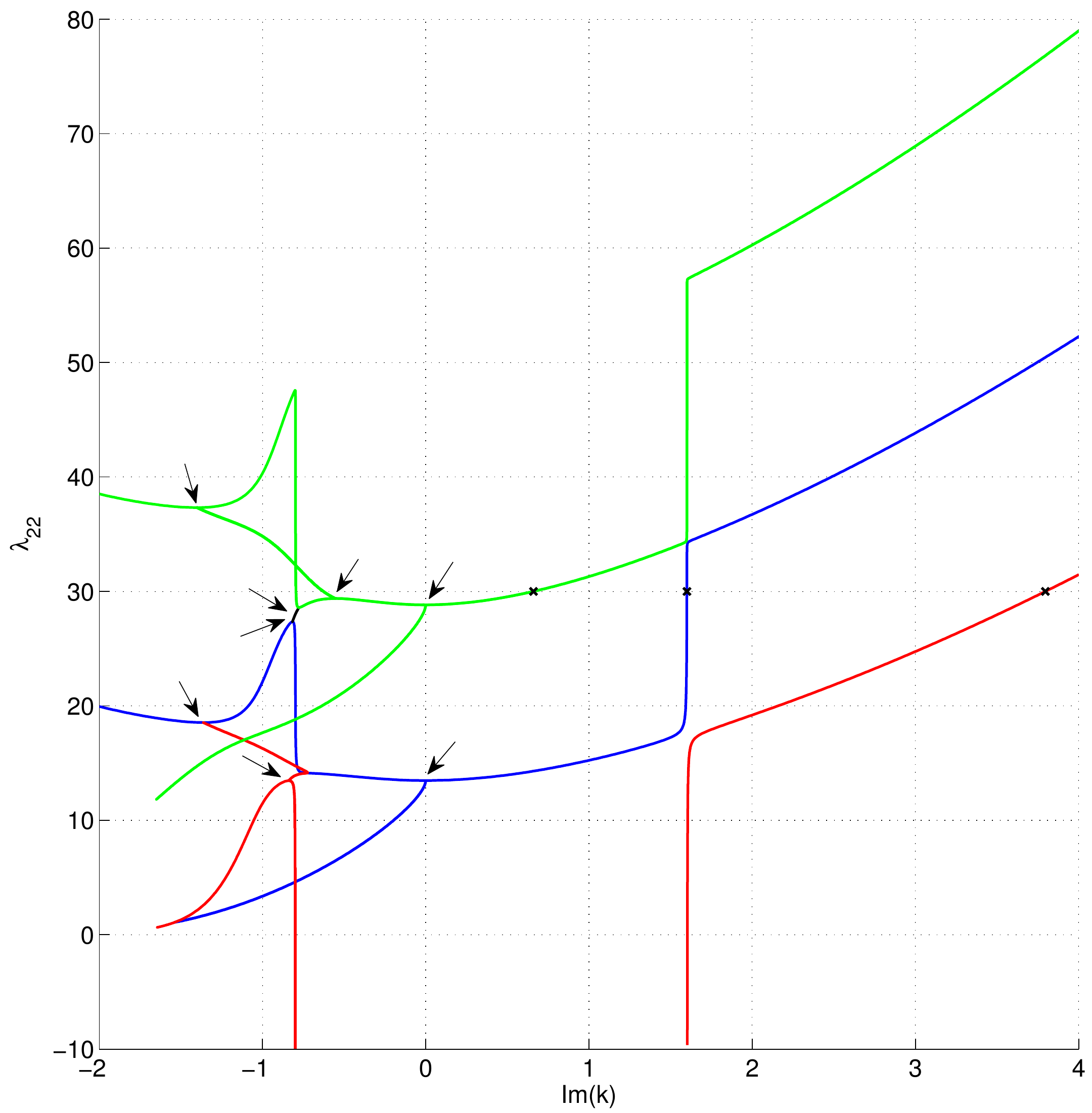}
		\label{fig:gauss_l1_l2_c03_side1}
	}
	\\
	\subfigure[Continuation curves in full space with projections along the sides.]{
		\includegraphics[height=11cm]{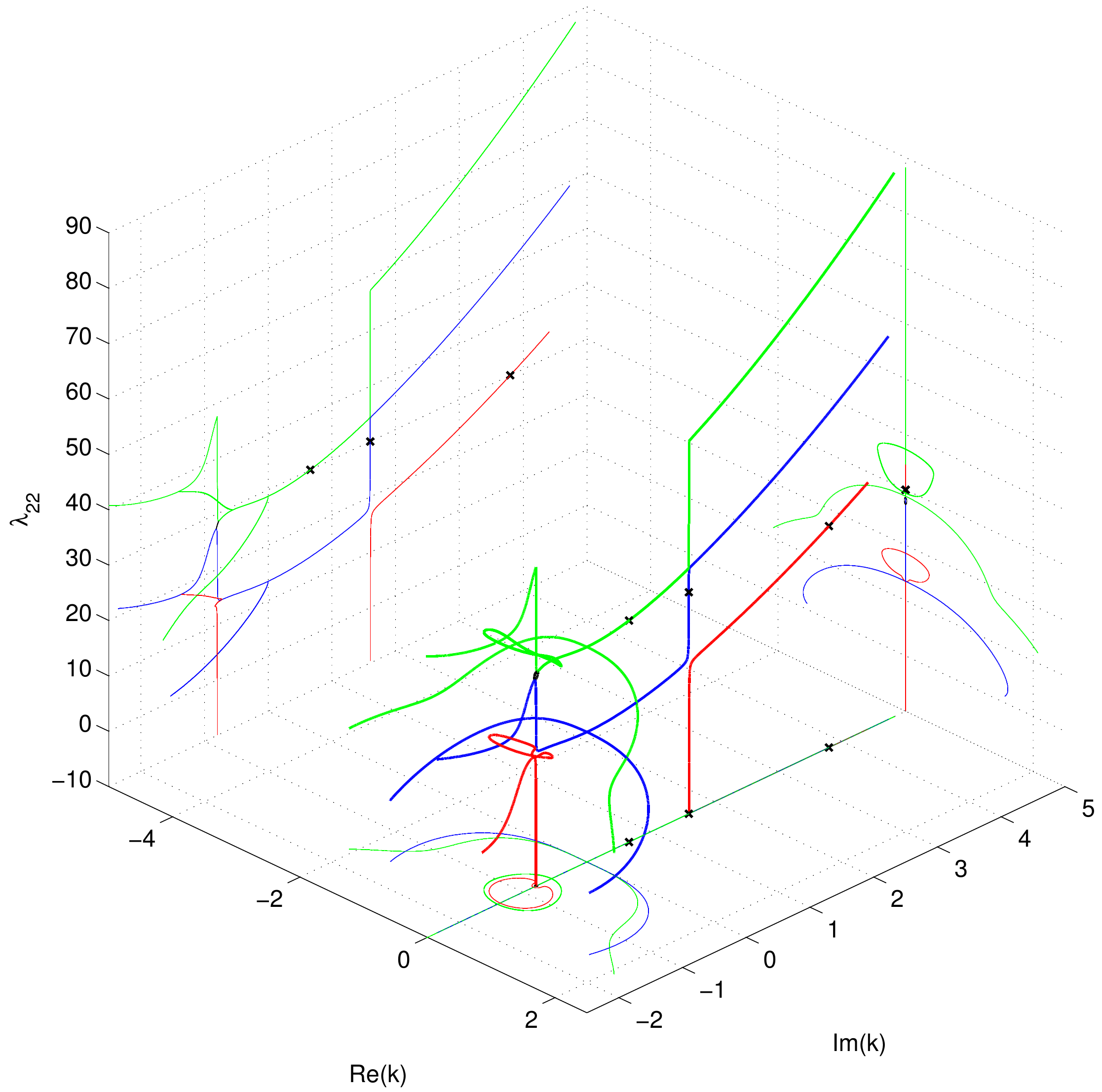}
		\label{fig:gauss_l1_l2_c03_persp}
	}
	\caption{Continuation curves of the Gauss $p$/$d$-wave coupling example. Branches of different colors originate from different starting points 
	(drawn as black crosses) in table~\ref{tab:gauss_l1_l2_c03_bs}: $i=0$ (red), $i=1$ (blue), $i=2$ (green)}
	\label{fig:gauss_l1_l2_c03_continuation}
\end{figure}

%
%
\section{Discussion and Conclusions}
In this article we have applied numerical continuation to track resonant states
in a coupled channel Sch\"odinger equation. In our approach, we track the zeros of a function that can be extracted from the numerical solutions 
of the coupled channel equation. This function, related to the $S$-matrix by a transformation, has a zero wherever the $S$-matrix has a pole and is amenable to
numerical continuation.
The analysis has also revealed that the function can have accidental poles which may slow down the convergence of the Newton iterations. However, for our model 
problems, we have not seen any difficulties caused by this. 

We have tested this approach and solved a range of two channel model problems with robust results. The method automatically tracked all the resonant 
states and detected the bifurcation points. It also successfully traversed the threshold which indicates a successful regularization.

Our motivation to develop this method originates from chemical reactions that are modeled by the Born-Oppenheimer approximation. It is based on slow-fast separation 
between the motion of nuclei and electrons that is justified because of their mass difference. In this approximation the modeling is a two step process. In the first step the 
scattering of the electrons in the field of the nuclei is solved with the positions of the nuclei as parameter in this scattering problem. It is especially important to identify the 
resonances in the scattering problem. In the second step the molecular dynamics of the nuclei, that move on the potential surface given by the resonances determined 
in the first step, is solved.

This paper is focused on the first step where the resonances need to be identified in a family of scattering problems each with a different parameter, which is the position 
of the nuclei in these problems.

In our future work we aim to create a robust and reliable tool that scales to realistic systems.

%
%

\section*{Acknowledgments}
We gratefully acknowledge the support from FWO-Vlaanderen through the project number G.0120.08.

\appendix
\section{Spherical functions}
\label{app:spherical}
Various solutions of the free radial wave equation are intensely used throughout this paper as they can describe the asymptotic behavior of the wave functions that solve the (non-free) Schr\"odinger equation. In an attempt to avoid ambiguities we summarize the used conventions in this appendix. For a far more thorough treatment we suggest \cite{Abramowitz1972} and \cite{Taylor2006}, where similar conventions are used.

Spherical Riccati-Bessel functions are defined as solutions of the free radial Schr\"odinger equation
\begin{equation}
	\label{eq:radial_schrodinger_free}
  	\left(\frac{d^{2}}{dr^{2}} - \frac{l(l+1)}{r^{2}} + k^{2} \right) u(r) = 0, \qquad k=\sqrt{2\mu E},\quad l\in\mathbb{N}.
\end{equation}
The different, commonly used particular solutions of this equation are: ($x=kr$)
\begin{itemize}
	\item $\hat{j}_{l}(x) \ = \ x\,j_{l}(x) \ = \ \sqrt{x\frac{\pi}{2}}J_{l+1/2}(x)$, regular solutions, spherical Riccati-Bessel functions. E.g.: $\hat{j}_{0}(x) = \sin(x),\ \hat{j}_{1}(x) = \frac{1}{x}\sin(x)-\cos(x),\ \ldots$
	\item $\hat{n}_{l}(x) \ = \ x\,n_{l}(x) \ = \ (-1)^{l+1}\sqrt{x\frac{\pi}{2}}J_{-l-1/2}(x)$, irregular, spherical Riccati-Neumann functions. E.g.: $\hat{n}_{0}(x) = -\cos(x),\ \hat{n}_{1}(x) = -\frac{1}{x}\cos(x)-\sin(x),\ \ldots$
	\item $\hat{h}_{l}^{\pm}(x) \ = \ -\hat{n}_{l}(x) \pm i\hat{j}_{l}(x)$ Incoming ($-$) and outgoing ($+$) spherical Riccati-Hankel functions. E.g.: $\hat{h}_{0}^{\pm}(x) = e^{\pm ix},\ \hat{h}_{1}^{\pm}(x) = (\frac{1}{x} \mp i)e^{\pm ix},\ \ldots$
\end{itemize}

\section{Regularization lemma}
\label{app:regularization}
This appendix technically describes the proof that the application of the regularizing procedure in section~\ref{sec:Regularization} does not introduce any zeros of the continuation function $\det(\mathbold{F}(k,\lambda))$ that can not be attributed to bound or resonant states. The proof involves the use of some well-known definitions and results from quantum mechanical scattering and gives a thorough explanation of the procedure yet is not necessary for the understanding and reproduction of the main results of our contribution. The conventions and definitions used are mostly those found in \cite{Taylor2006} and, on occasion in \cite{Newton1982,Newton1960}. Section \ref{appsec:jost} summarizes briefly a necessary minimal subset of definitions for the proof whereas section \ref{appsec:boundsproof} proves why the regularizing procedure does not introduce false resonances or bound states.

For clarity we omit the potential parameter $\lambda$ in our notations as it does not play a role in the derivations; unless it is really necessary or clarifying. Following~\cite{Taylor2006} we do however, indicate the explicit dependence of the wave functions on $k$, e.g.\ $\mathbold{\Psi}(k,r)$.

\subsection{The Jost matrix and the $S$-matrix}
\label{appsec:jost}
The ``regular solution'' $\mathbold{\Phi}(k,r)$ of equation~\eqref{eq:matrixeq} is defined to behave exactly as $\hat{\mathbold{j}}(kr)$ at $r=0$. This fixes both $\mathbold{\Phi}$ and its normalization uniquely. At infinity the regular solution must tend to a linear combination of two other linearly independent solutions such as the free solutions $\hat{\mathbold{h}}^{\pm}$, hence allowing us to write
\begin{equation}
	\mathbold{\Phi}(k,r) \xrightarrow{r\to\infty} \frac{i}{2}\Big( \hat{\mathbold{h}}^{-}(kr)\mathbold{f}(k) \, - \, \hat{\mathbold{h}}^{+}(kr){\mathbold{f}(k)}^{*} \Big),
\end{equation}
which defines the ``Jost matrix'' $\mathbold{f}(k)$ as a diagonal matrix that bundles so called ``Jost functions'' $f_{l}(k)$
\begin{equation}
	\mathbold{f}(k) = \begin{pmatrix}
		f_{0}(k) \\
		& \ddots \\
		&& f_{l_{\text{max}}}(k)
	\end{pmatrix}.
\end{equation}
(it is known that ${f_{l}(k)}^{*} = f_{l}(-k)$ and hence ${\mathbold{f}(k)}^{*} = \mathbold{f}(-k)$) Since we have already written the asymptotic form of the ``physical solution'' $\mathbold{\Psi}(k,r)$ (which is calculated by the numerical solver) in terms of the $S$-matrix as in
\begin{equation}
	\mathbold{\Psi}(k,r) \xrightarrow{r\to\infty} \frac{i}{2}\left( \hat{\mathbold{h}}^{-}(kr) - \hat{\mathbold{h}}^{+}(kr)\mathbold{S}(k) \right),
\end{equation}
it follows that the $S$-matrix can be expressed as the ratio of the Jost matrices
\begin{equation}
	\mathbold{S}(k) = \mathbold{f}(-k){\mathbold{f}(k)}^{-1}
\end{equation}
and that $\mathbold{\Phi}$ and $\mathbold{\Psi}$ are related through the well-known expression\footnote{Please note the different approaches studied in various reference works on the subject. Some authors such as \cite{Taylor2006} prefer the simpler approach of expression \eqref{eq:physical_vs_regular}, whereas others like \cite{Newton1982} explicitly write certain factors without incorporating them, for instance, in the Jost function. In the single channel variant of \eqref{eq:physical_vs_regular} for instance this would result in $\psi_{l}(k,r) = \frac{k^{l+1}}{(2l+1)!!}\frac{\phi_{l}(k,r)}{f_{l}(k)}$. Unless explicitly stated otherwise, we choose to follow the former convention.}
\begin{equation}
	\label{eq:physical_vs_regular}
	\mathbold{\Psi}(k,r) = \mathbold{\Phi}(k,r){\mathbold{f}(k)}^{-1}.
\end{equation}
The previous equations illustrate a widely accepted result of bound and resonant states being represented, equivalently, by poles of the $S$-matrix and zeros of the Jost function. As we can not extract the Jost matrix directly from the numerical solution of \eqref{eq:matrixeq} but only a function that is proportional the Jost matrix, we have to use the poles of the $S$-matrix as the characterization of bound and resonant states.

The physical solution $\mathbold{\Psi}$ can also be written as the solution of the coupled channel partial wave version of the Lippmann-Schwinger equation
\begin{equation}
	\label{eq:lippmann-schwinger}
	\mathbold{\Psi}(r) = \hat{\mathbold{j}}(kr) + \int_{0}^{\infty}dr'\, \mathbold{G}(r,r';k)\mathbold{U}(r')\mathbold{\Psi(r')},
\end{equation}
where $\mathbold{G}(r,r';k)$ denotes the diagonal matrix of one-channel Green's functions $G_{l_{i}}^{0}(r,r';k) = \frac{-1}{k}\hat{j}_{l}(kr_{<})\hat{h}_{l}^{+}(kr_{>})$, with $r_{<} = \min(r,r')$ and $r_{>} = \max(r,r')$ and where $\mathbold{U}=2\mu\mathbold{V}$.
If we now combine \eqref{eq:lippmann-schwinger} with the asymptotic behavior of \eqref{eq:matrixboundary2} we find
\begin{equation}
	\mathbold{S}(k) - \mathbold{I} = \frac{-2i}{k} \int_{0}^{\infty} dr'\, \hat{\mathbold{j}}(kr') \mathbold{U}(r') \mathbold{\Psi}(r').
\end{equation}
Inside the integral we replace $\mathbold{\Psi}$ with the regular solution $\mathbold{\Phi}$ through the identity~\eqref{eq:physical_vs_regular} which leads to the equation
\begin{equation}
	\label{eq:145907062010}
	\mathbold{S}(k) - \mathbold{I} = \frac{-2i}{k}\left( \int_{0}^{\infty} dr'\, \hat{\mathbold{j}}(kr') \mathbold{U}(r') \mathbold{\Phi}(r') \right){\mathbold{f}(k)}^{-1}.
\end{equation}

\subsection{Bounds on the regularization}
\label{appsec:boundsproof}
Equation~\eqref{eq:145907062010} states that the $(l_{\text{max}}+1) \times (l_{\text{max}}+1)$ matrix $\mathbold{S}(k)-\mathbold{I}$ is proportional to a matrix of integrals multiplied by the inverse of the Jost matrix. To guarantee that the regularizing procedure from section~\ref{sec:Regularization} defined by eq.~\eqref{eq:det_multi_channel_regularized} as
\begin{equation}
	\label{appeq:det_multi_channel_regularized}
	\det(\mathbold{F}(k,\lambda)) = \left( \prod_{l=0}^{l_{\text{max}}} k^{2l+1} \right) \bigg/ \det(\mathbold{S}(k,\lambda)-\mathbold{I}),
\end{equation}
does not introduce false bound states or resonances we must prove that zeros of the Jost matrix $\mathbold{f}(k)$ are the only contributors to the poles of expression~\eqref{eq:145907062010} and as such, to the zeros of $\det(\mathbold{F}(k,\lambda))$. This requires the matrix of integrals in equation~\eqref{eq:145907062010} to be bounded which is guaranteed by the following lemma.

An important remark has to be made ragarding the regularity constraints of the potential matrix $\mathbold{V}$. The proof of lemma~\ref{lem:integralbound} is a generalization of the one channel case where coupling terms in the potential do not occur. There, the boundary conditions that define the regular solution $\varphi_{l}$ and the iterative procedure of solving the one channel analogue of integral equation~\eqref{eq:lippmann-schwinger} for $\varphi_{l}$ indeed would give rise to proper bounds. Contrary to the one channel case however, the off-diagonal terms in $\mathbold{V}$ that couple channels with different angular momenta result in serious singularities in the origin $r=0$, hereby destroing the possibility for a bound without very restrictive assumptions on the potentials that would cancel out the singular behavior in the origin~\cite{Newton1982,Newton1960}. These restrictions can be lifted to the much more liberal requirement for the off-diagonal terms $\int_{0}^{\infty}dr\,r|V_{i\neq j}(r)| < \infty$ by rewriting the integral equation~\eqref{eq:lippmann-schwinger} with additional terms. The iterative procedure to solve the integral equation then gives the same regularity properties of $\mathbold{\Phi}$ as in the one channel case.

To retain clarity we demonstrate the bound using the more restrictive assumptions for the off-diagonal potential terms, keeping in mind that a similar bound can be found in the more liberal case of, for instance, the potentials studied in the numerical examples in section~\ref{sec:results}.

\begin{lemma}
\label{lem:integralbound}
	Let $\mathbold{\Phi}$ be the regular solution of eq.~\eqref{eq:matrixeq} where $\mathbold{U} = 2\mu\mathbold{V}$ the potential matrix satisfies the (restrictive) short range conditions
	\begin{equation}
		\label{eq:192007062010}
		\begin{cases}
			\int_0^\infty dr \, r^{l_{i}+l_{j}+2}|V_{ij}(r)| < \infty & \forall i,j\\
			\int_{0}^{\infty} dr\, r^{-l_{i}+l_{j}+1}|V_{ij}(r)| < \infty & \forall i,j,
		\end{cases}
	\end{equation}
	and let $\hat{\mathbold{j}}(kr)$ be the diagonal matrix of spherical Riccati-Bessel functions with corresponding $l_{i}$, then the integral
	\begin{equation}
		\mathbold{\mathcal{I}} = \int_0^\infty dr \, \hat{\mathbold{j}}(kr)\mathbold{U}(r) \mathbold{\Phi}(r) \label{eq:integral_riccati_potential_regular},
	\end{equation}
	is bounded.
\end{lemma}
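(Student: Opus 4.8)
The plan is to prove the estimate in two stages: first to establish pointwise bounds on the regular solution $\mathbold{\Phi}(k,r)$, valid uniformly for $0\le r<\infty$, and then to feed those bounds into the integrand of $\mathbold{\mathcal{I}}$, using the two short-range conditions in~\eqref{eq:192007062010} to control the origin and the tail separately. First I would rephrase the defining problem for $\mathbold{\Phi}$ (the solution normalized so that $\mathbold{\Phi}(k,r)\to\hat{\mathbold{j}}(kr)$ as $r\to0$) as a Volterra integral equation. Since the free channels decouple, the relevant kernel is the diagonal matrix $\mathbold{g}(r,r';k)$ with entries
\[
	g_{l_i}(r,r';k) = \frac{1}{k}\Big( \hat{j}_{l_i}(kr')\,\hat{n}_{l_i}(kr) - \hat{n}_{l_i}(kr')\,\hat{j}_{l_i}(kr) \Big),
\]
so that
\[
	\mathbold{\Phi}(k,r) = \hat{\mathbold{j}}(kr) + \int_0^r dr'\, \mathbold{g}(r,r';k)\,\mathbold{U}(r')\,\mathbold{\Phi}(k,r'),
\]
in which the channel coupling enters only through the off-diagonal entries of $\mathbold{U}$.

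Next I would invoke the classical pointwise bounds on the Riccati–Bessel and Riccati–Neumann functions (cf.\ \cite{Taylor2006,Newton1982}),
\[
	|\hat{j}_l(kr)| \le C\,\frac{(|k|r)^{l+1}}{(1+|k|r)^{l+1}}\,e^{|\mathrm{Im}\,k|\,r}, \qquad |\hat{n}_l(kr)| \le C\,\frac{(1+|k|r)^{l}}{(|k|r)^{l}}\,e^{|\mathrm{Im}\,k|\,r},
\]
which combine to estimate $g_{l_i}(r,r';k)$ for $0\le r'\le r$. Solving the Volterra equation by successive approximations, $\mathbold{\Phi}^{(0)}=\hat{\mathbold{j}}$ and $\mathbold{\Phi}^{(n+1)}=\int_0^r \mathbold{g}\,\mathbold{U}\,\mathbold{\Phi}^{(n)}$, I would prove by induction that each component satisfies
\[
	\big|\Phi^{(n)}_{mj}(k,r)\big| \le C\,\frac{(|k|r)^{l_m+1}}{(1+|k|r)^{l_m+1}}\,e^{|\mathrm{Im}\,k|\,r}\,\frac{[\,C\,\sigma(r)\,]^n}{n!},
\]
where $\sigma(r)=\max_{i,m}\int_0^r dr'\,(r')^{-l_i+l_m+1}|U_{im}(r')|$. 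The decisive point is that the $(r')^{-l_i}$ singularity of the Neumann factor in $\mathbold{g}$, acting on the $(r')^{l_m+1}$ behavior of the solution in the coupled channel, produces exactly the weight $(r')^{-l_i+l_m+1}$; the \emph{second} condition in~\eqref{eq:192007062010} is precisely what makes $\sigma(\infty)$ finite, so the Neumann series converges and summation yields a uniform bound on $\Phi_{mj}(k,r)$ of the same power-law form as the free function $\hat{j}_{l_m}$.

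Finally I would substitute this bound into the integrand of $\mathbold{\mathcal{I}}$, estimating
\[
	|\mathcal{I}_{ij}| \le \sum_m \int_0^\infty dr\, |\hat{j}_{l_i}(kr)|\,|U_{im}(r)|\,|\Phi_{mj}(k,r)|.
\]
Near the origin the two power-law prefactors multiply to $r^{l_i+l_m+2}$, so convergence at $r=0$ is guaranteed by the \emph{first} condition in~\eqref{eq:192007062010}; in the tail the fractional prefactors are bounded and the same condition forces the required decay of $|U_{im}|$, giving the boundedness of $\mathbold{\mathcal{I}}$. Thus the two conditions play complementary roles: the origin-controlling estimate on $\mathbold{\Phi}$ uses the second, while the integrability of $\mathbold{\mathcal{I}}$ itself uses the first.

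I expect the main obstacle to be the inductive bound on $\mathbold{\Phi}$ rather than the final substitution. Unlike the single-channel situation, where the Neumann factor couples a channel only to itself and the weight reduces to the harmless $r'/(1+|k|r')$, here the off-diagonal term $(r')^{-l_i}$ acting on a channel of smaller angular momentum generates a genuine singularity at $r=0$, so closing the iteration requires the full strength of the second condition and a careful term-by-term tracking of the channel indices. A secondary delicacy is the factor $e^{2|\mathrm{Im}\,k|\,r}$ appearing at infinity for complex $k$: the polynomial conditions~\eqref{eq:192007062010} suffice for real scattering momenta, while for the complex $k$ of bound and resonant states the exponential is absorbed by the stronger-than-polynomial decay of the physical short-range potentials used in Section~\ref{sec:results}, exactly as the discussion preceding the lemma anticipates.
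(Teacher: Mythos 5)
Your proposal follows essentially the same route as the paper's proof: both recast $\mathbold{\Phi}$ as a Volterra integral equation with the diagonal free Green's matrix, iterate to a Born-type series, apply the standard power-law/exponential bounds on $\hat{j}_l$ and the Green's functions, and let the second condition in~\eqref{eq:192007062010} tame the $(r')^{-l_i}$ singularity from the off-diagonal coupling while the first condition controls the integral $\mathbold{\mathcal{I}}$ itself. The only real difference is one of bookkeeping and rigor in your favor: your inductive $[C\sigma(r)]^n/n!$ estimate makes explicit the summability of the series, which the paper merely asserts, and your remark that the $e^{2|\mathrm{Im}\,k|\,r}$ factor is not absorbed by the purely polynomial weights for complex $k$ correctly flags a point the paper also leaves implicit.
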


\begin{proof}
Because of the particular boundary conditions for $r \rightarrow 0$ and the restrictions on the potential matrix $\mathbold{V}$ the regular solution $\mathbold{\Phi}$ fits the matrix integral equation (20.19) in \cite{Taylor2006}
\begin{equation}
	\mathbold{\Phi}(r) = \hat{\mathbold{j}}(kr) + \int_0^r dr^\prime \, \mathbold{g}(r,r^\prime) \mathbold{U}(r^\prime) \mathbold{\Phi}(r^\prime), 
\end{equation}
where $\mathbold{g}(r,r^\prime)$ is a diagonal matrix composed of Green's functions
\begin{equation}
	g_{l_i}(r,r^\prime) = (1/k) \left( \hat j_{l_{i}}(kr)\hat n_{l_{i}}(kr^\prime) -\hat n_{l_{i}}(kr) \hat j_{l_{i}}(kr^\prime)\right),
\end{equation}
on the diagonal.

This regular solution matrix can be written as a series 
$\mathbold{\Phi}(r) = \sum_n \mathbold{\Phi}^{(n)}(r)$ with  the first term $\mathbold{\Phi}^{(0)} = \hat{\mathbold{j}}(kr)$ and the following recurrence relation between $\mathbold{\Phi}^{(n)}$ and $\mathbold{\Phi}^{(n-1)}$ 
\begin{equation}
	\mathbold{\Phi}^{(n)}(r) = \int_0^r dr^\prime \, \mathbold{g}(r,r^\prime)\mathbold{U}(r^\prime) \mathbold{\Phi}^{(n-1)}(r^\prime).
\end{equation}

The integral \eqref{eq:integral_riccati_potential_regular} can then be written as
\begin{align}
	\mathcal{I} &= \sum_{n=0}^{\infty} \int_0^\infty dr \, \hat{\mathbold{j}}(kr) \mathbold{U}(r) \mathbold{\Phi}^{(n)}(r) \\
	&= \int_0^\infty dr \, \hat{\mathbold{j}}(kr) \mathbold{U}(r) \hat{\mathbold{j}}(kr) + \int_0^\infty dr \, \hat{\mathbold{j}}(kr) \mathbold{U}(r) \int_0^r dr_{1} \, \mathbold{g}(r,r_{1}) \mathbold{U}(r_{1})\hat{\mathbold{j}}(kr_{1}) \\
	&+ \cdots + \int_0^\infty dr \, \hat{\mathbold{j}}(kr) \mathbold{U}(r) \int_0^r dr_1 \, \mathbold{g}(r,r_1) \mathbold{U}(r_1) \int_0^{r_1} dr_2 \, \cdots \int_0^{r_{n-1}} dr_n \, \mathbold{g}(r_{n-1},r_n)\mathbold{U}(r_n) \hat{\mathbold{j}}(kr_n) + \cdots
\end{align}

Without loss of generality, let us look for example at a $2\times2$ coupled channel problem with $l_1$ in the first channel and $l_2$ in the second channel. The integral above is then
\begin{align}
	\mathcal{I} &= \int_{0}^{\infty} dr \,
	\begin{pmatrix}
		\hat{j}_{l_1}(kr) & 0 \\
		0 & \hat{j}_{l_2}(kr)
	\end{pmatrix}
	\begin{pmatrix}
		U_{11}(r) &U_{12}(r) \\
		U_{21}(r) &U_{22}(r)
	\end{pmatrix}
	\begin{pmatrix}
		\hat{j}_{l_1}(kr) & 0 \\
		0 & \hat{j}_{l_2}(kr)
	\end{pmatrix} \\
	&+\int_{0}^{\infty} dr \,
	\begin{pmatrix}
		\hat{j}_{l_1}(kr) & 0 \\
		0 & \hat{j}_{l_2}(kr)
	\end{pmatrix}
	\begin{pmatrix}
		U_{11}(r) & U_{12}(r) \\
		U_{21}(r) & U_{22}(r)
	\end{pmatrix} \\
	&\times \int_0^r dr_1 \,
	\begin{pmatrix}
		g_{l_1}(r,r_1) & 0 \\
		0 & g_{l_2}(r,r_1)
	\end{pmatrix}
	\begin{pmatrix}
		U_{11}(r_1) &U_{12}(r_1) \\
		U_{21}(r_1) &U_{22}(r_1)
	\end{pmatrix}
	\begin{pmatrix}
		\hat{j}_{l_1}(kr_1) & 0 \\
		0 & \hat{j}_{l_2}(kr_1)
	\end{pmatrix} + \ldots .
\end{align}

When we look at a particular element of this two by two example we see the contributions
\begin{multline}
	\mathcal{I}_{11} = \int_0^\infty dr \, \hat{j}_{l_1}(kr) U_{11}(r) \hat{j}_{l_1}(kr)
	+ \int_0^\infty dr \, \hat{j}_{l_1}(kr) U_{11}(r) \int_0^r dr_1 \, g_{l_1}(r,r_1) U_{11}(r_1) \hat{j}_{l_1}(kr_1) \\
	+ \int_0^\infty dr \, \hat{j}_{l_1}(kr) U_{12}(r) \int_0^r dr_1 \, g_{l_2}(r,r_1) U_{21}(r_1) \hat{j}_{l_1}(kr_1) + \ldots .
\end{multline}
We note that Riccati-Bessel functions that appear in the integral both have $l=l_1$. In a similar way, 
the matrix element $\mathcal{I}_{12}$ will have the $\hat{j}_{l_1}$ in the beginning of the integral and $\hat j_{l_2}$ at the end.

To show that this integral $\mathcal{I}_{11}$ is finite we make use of the bounds fond in \cite{Newton1982,Newton1960,Taylor2006}. There exist constants $C_{1}$ and $C_{2}$ such that
\begin{align}
	|\hat{j}_l(kr)| &\le C_1 e^{|\text{Im}(k)|r}\left(\frac{|k|r}{1+|k|r}\right)^{l+1} \\
	|g_l(r,r^\prime)| &\le C_2 e^{|\text{Im}(k)|(r-r^\prime)}\frac{1}{|k|} \left(\frac{|k|r}{1+|k|r} \right)^{l+1}\left(\frac{|k|r^\prime}{1+|k|r^\prime} \right)^{-l}.
	\label{eq:bounds}
\end{align}
After substitution we find
\begin{equation}
	\begin{aligned}
		|\mathcal{I}_{11}| \le k^{2l_1+2} &\bigg( C_1^2 \int_0^\infty dr \, e^{2|\text{Im}(k)|r}\left(\frac{r}{1+|k|r}\right)^{2l_1+2}|U_{11}(r)| \\
		& + C_1^2 \int_0^\infty dr \, e^{|\text{Im}(k)|r}\left(\frac{r}{1+|k|r}\right)^{l_1+1} |U_{11}(r) |
		C_2 e^{|\text{Im}(k)|r}\left(\frac{r}{1+|k|r}\right)^{l_1+1} \\
		& \times \int_0^r dr_1 \, C_2 e^{-|\text{Im}(k)|r_1}\left(\frac{r_1}{1+|k|r_1} \right)^{-l_1} |U_{11}(r_1)| e^{|\text{Im}(k)|r_1}\left(\frac{r_1}{1+|k|r}\right)^{l_1+1} \\
		&+ C_1^2 \int_0^\infty dr \, e^{ |\text{Im}(k)|r}\left(\frac{r}{1+|k|r}\right)^{l_1+1} |U_{12}(r) |
		C_2 e^{|\text{Im}(k)|r}\left(\frac{r}{1+|k|r}\right)^{l_2+1} \\
		& \times \int_0^r dr_1 \, C_2 e^{-|\text{Im}|(k)r_1}\left(\frac{r_1}{1+|k|r_1} \right)^{-l_2} |U_{21}(r_1)| e^{|\text{Im}(k)|r_1}\left(\frac{r_1}{1+|k|r}\right)^{l_1+1} \\
		& + \ldots \bigg) \\
		= k^{2l_1+2}& C_{11}, \qquad \text{for some constant } C_{11}.
	\end{aligned}
\end{equation}
This series can be summed up because the potentials $V_{11}$, $V_{12}$, $V_{21}$ and $V_{22}$ satisfy the regularity condition~\eqref{eq:192007062010}. In a similar way we can find bounds for the other elements of $\mathcal{I}$.

In a more general case we have the bound
\begin{equation}
	 \begin{pmatrix}
	 	 |\mathcal{I}_{11}| & |\mathcal{I}_{12}| & \ldots \\ 
		|\mathcal{I}_{21}| & |\mathcal{I}_{22}| &\\
		\vdots  &  & \ddots\\
	 \end{pmatrix}
	\le 
	\begin{pmatrix}
    		 k^{2l_1+2} C_{11} &  k^{l_1 + l_2 +2} C_{12} & \ldots \\
  		k^{l_2 + l_1 +2} C_{21} &  k^{2 l_2 +2} C_{22} &  \\
  		\vdots & & \ddots
   	 \end{pmatrix}.
\end{equation}
This indeed means that the only way expression~\eqref{appeq:det_multi_channel_regularized} can have a zero is through a zero of the Jost function, which ensures that no additional false ``resonances'' or ``bound states'' are introduced by using zeros of this function to locate them. Naturally every zero of $\mathbold{f}(k,\lambda)$ will result in a zero of $\det(\mathbold{F}(k,\lambda))$ which means we are able to track all resonances and bound states through the use of \eqref{appeq:det_multi_channel_regularized}. This proves the equivalence of bound and resonant states and the zeros of $\det(\mathbold{F}(k,\lambda))$.
\end{proof}


\bibliographystyle{elsarticle-num}
\bibliography{coupled}

\begin{thebibliography}{10}
\expandafter\ifx\csname url\endcsname\relax
  \def\url#1{\texttt{#1}}\fi
\expandafter\ifx\csname urlprefix\endcsname\relax\def\urlprefix{URL }\fi

\bibitem{Broeckhove2009}
J.~Broeckhove, P.~Klosiewicz, W.~Vanroose, Applying numerical continuation to
  the parameter dependence of solutions of the schr\"odinger equation, Journal
  of Computational and Applied Mathematics 234~(4) (2010) 1238--1248.
\newblock \href {http://dx.doi.org/10.1016/j.cam.2009.07.054}
  {\path{doi:10.1016/j.cam.2009.07.054}}.

\bibitem{fahy2007sound}
F.~Fahy, P.~Gardonio, {Sound and structural vibration: radiation, transmission
  and response}, Academic Press, 2007.

\bibitem{dhia2007resonances}
A.-S. {Bonnet-Ben Dhia}, J.-F. Mercier, {Resonances of an elastic plate in a
  compressible confined fluid}, Quarterly Journal of Mechanics and Applied
  Mathematics 60~(4) (2007) 397.

\bibitem{Taylor2006}
J.~R. Taylor, Scattering Theory: The Quantum Theory of Nonrelativistic
  Collisions, Dover Publications, Inc., Mineola, New York, 2006.

\bibitem{Newton1982}
R.~G. Newton, Scattering Theory of Waves and Particles, 2nd Edition, Texts and
  Monographs in Physics, Springer-Verlag, 1982.

\bibitem{Seydel1994}
R.~Seydel, Practical Bifurcation and Stability Analysis, From Equilibrium to
  Chaos, Springer-Verlag, 1994.

\bibitem{Doedel2007}
E.~J. Doedel, Lecture notes on numerical analysis of nonlinear equations,
  Numerical Continuation Methods for Dynamical Systems: Path following and
  boundary value problems, Springer-Verlag, Dordrecht (2007) 51--75.

\bibitem{Allgower1990}
E.~L. Allgower, K.~Georg, Numerical Continuation Methods - An Introduction,
  Vol.~13 of Springer Series in Computational Mathematics, Springer-Verlag,
  1990.

\bibitem{Keller1977}
H.~B. Keller, Numerical solution of bifurcation and nonlinear eigenvalue
  problems, Applications of Bifurcation Theory (ed. P. H. Rabinowitz) (1977)
  159--384.

\bibitem{Keller1969}
J.~B. Keller, S.~Antman (Eds.), Bifurcation Theory and Nonlinear Eigenvalue
  Problems, Benjamin, New York, 1969.

\bibitem{Doedel1981}
E.~J. Doedel, \textsc{Auto}: A program for the automatic bifurcation analysis
  of autonomous systems, Cong. Num. 30 (1981) 265?284.

\bibitem{AUTO2007}
\href{http://indy.cs.concordia.ca/auto}{Auto - software for continuation and
  bifurcation problems in ordinary differential equations}, version AUTO-07p
  available online (August 2007).
\newline\urlprefix\url{http://indy.cs.concordia.ca/auto}

\bibitem{Salinger2002}
A.~G. Salinger, N.~M. Bou-Rabee, R.~P. Pawlowski, E.~D. Wilkes, LOCA 1.1
  Library Of Continuation Algorithms: Theory and Implementation Manual, Sandia
  National Laboratories P.O.\ Box 5800 Albuquerque, NM 87185-1111, 2002.

\bibitem{Heroux2005}
M.~A. Heroux, R.~A. Bartlett, V.~E. Howle, R.~J. Hoekstra, J.~J. Hu, T.~G.
  Kolda, R.~B. Lehoucq, K.~R. Long, R.~P. Pawlowski, E.~T. Phipps, A.~G.
  Salinger, H.~K. Thornquist, R.~S. Tuminaro, J.~M. Willenbring, A.~Williams,
  K.~S. Stanley, An overview of the trilinos project, ACM Trans. Math. Softw.
  31~(3) (2005) 397--423.
\newblock \href {http://dx.doi.org/10.1145/1089014.1089021}
  {\path{doi:10.1145/1089014.1089021}}.

\bibitem{Dhooge2006}
A.~Dhooge, W.~Govaerts, Y.~A. Kuznetsov, W.~Mestrom, A.~M. Riet, B.~Sautois,
  MATCONT and CL\_MATCONT: Continuation toolboxes in \textsc{matlab},
  Universiteit Gent, Belgium and Utrecht University, The Netherlands (2006).

\bibitem{Henderson2002}
M.~E. Henderson, Multiple parameter continuation: Computing implicitly defined
  k-manifolds, International Journal of Bifurcation and Chaos 12~(3) (2002)
  451--476.

\bibitem{Johnson1977}
B.~R. Johnson, New numerical methods applied to solving the one-dimensional
  eigenvalue problem, The Journal of Chemical Physics 67 (1977) 4086--4093.

\bibitem{Johnson1978}
B.~R. Johnson, The renormalized numerov method applied to calculating bound
  states of the coupled-channel schroedinger equation, The Journal of Chemical
  Physics 69 (1978) 4678--4688.

\bibitem{Vanroose2002}
W.~Vanroose, C.~W. McCurdy, T.~N. Rescigno, Interpretation of low-energy
  electron-$\text{CO}_{2}$ scattering, Physical Review A 66 (2002) 032720.

\bibitem{Ledoux2005}
V.~Ledoux, M.~Van~Daele, G.~Vanden~Berghe, Matslise: A matlab package for the
  numerical solution of sturm-liouville and schr{\"o}dinger equations, ACM
  Transactions on Mathematical Software 31 (2005) 532.

\bibitem{Ledoux2006}
V.~Ledoux, M.~Van~Daele, G.~Vanden~Berghe, \textsc{CPM}\{$P$,$N$\} methods
  extended for the solution of coupled channel schr\"odinger equations,
  Computer Physics Communications 174 (2006) 357--370.

\bibitem{Ledoux2007}
V.~Ledoux, M.~Van~Daele, G.~Vanden~Berghe, A numerical procedure to solve the
  multichannel schr\"odinger eigenvalue problem, Computer Physics
  Communications 176 (2007) 191--199.

\bibitem{Matscs}
\href{http://www.nummath.ugent.be/SLsoftware}{\textsc{Matscs}: Matlab package
  implementing the \textsc{CPM}\{$P$,$N$\} methods for the numerical solution
  of the multichannel schr\"odinger eigenvalue problems}.
\newline\urlprefix\url{http://www.nummath.ugent.be/SLsoftware}

\bibitem{Abramowitz1972}
M.~Abramowitz, I.~A. Stegun (Eds.), Handbook of Mathematical Functions, Dover
  Publications, Inc., New York, 1972.

\bibitem{Newton1960}
R.~G. Newton, Analytic properties of radial wave functions, Journal of
  Mathematical Physics 1~(4) (1960) 319--347.
\newblock \href {http://dx.doi.org/10.1063/1.1703665}
  {\path{doi:10.1063/1.1703665}}.

\end{thebibliography}


\end{document}